\def\be{\begin{equation}}
\def\ee{\end{equation}}
\newcommand{\bbC}{{\mathbb C}}
\newcommand{\bbZ}{{\mathbb Z}}
\newcommand{\cl}{{\rm{cl}}}
\newcommand{\qu}{{\rm{qu}}}
\newcommand{\scrS}{\mathscr{S}}
\newcommand{\caH}{\mathcal{H}}
\def\Tr{{\operatorname{Tr\,}}}
\newcommand{\e}[1]{\,{\rm e}^{#1}\,}
\newtheorem{thm}{Theorem}
\newtheorem{cor}[thm]{Corollary}
\newtheorem{lem}[thm]{Lemma}
\numberwithin{thm}{section}
\begin{document}


\title{Correlation inequalities for classical and quantum XY models}

\author{Costanza Benassi}
\author{Benjamin Lees}
\author{Daniel Ueltschi}
\address{Department of Mathematics, University of Warwick,
Coventry, CV4 7AL, United Kingdom}
\email{C.Benassi@warwick.ac.uk}
\address{Department of Mathematics, University of Gothenburg, Sweden}
\email{benjaminlees90@gmail.com}
\address{Department of Mathematics, University of Warwick,
	Coventry, CV4 7AL, United Kingdom}
\email{daniel@ueltschi.org}

\maketitle

\begin{abstract}We review correlation inequalities of truncated functions for the classical and quantum XY models. A consequence is that the critical temperature of the XY model is necessarily smaller than that of the Ising model, in both the classical and quantum cases. We also discuss an explicit lower bound on the critical temperature of the quantum XY model.\end{abstract}

\section{Setting and results}

The goal of this survey is to recall some results of old that have been rather neglected in recent years. We restrict ourselves to the cases of classical and quantum XY models.
Correlation inequalities are an invaluable tool that allows to obtain the monotonicity of spontaneous magnetisation, the existence of infinite volume limits, and comparisons between the critical temperatures of various models. Many correlation inequalities have been established for the planar rotor (or classical XY) model, with interesting applications and consequences in the study of the phase diagram and the Gibbs states \cite{Dun,KPV75,Mon,MP,KPV76,MMP,FP}.
Some of these inequalities can also be proved for its quantum counterpart \cite{Gal,Suz,PM2,BLU}.

Let $\Lambda$ be a finite set of sites. The classical XY model (or planar rotor model) is a model for interacting spins on such a lattice. The configuration space of the system is defined as $\Omega_\Lambda = \{\{\sigma_x\}_{x\in\Lambda}: \sigma_x\in S^1 \,\,\forall x\in\Lambda\}$: each site hosts a unimodular vector lying on a unit circle. It is convenient to represent the spins by means of angles, namely
\begin{eqnarray}
\sigma_x^1 &=& \cos{\phi_x}\label{cos}\\
\sigma_x^2 &=& \sin{\phi_x}\label{sin}
\end{eqnarray}
with $\phi_x\in\left[0,2\pi\right]$.
The energy of a configuration $\sigma \in \Omega_\Lambda$ with angles $\phi =~\{\phi_x\}_{x\in\Lambda} $ is
\be
H_\Lambda^{\rm{cl}}(\phi) = -\sum_{A \subset \Lambda} J^1_A  \prod_{x \in A} \sigma^1_x + J^2_A \prod_{x \in A} \sigma^2_x,\label{classicalXY}
\ee
with $J^i_A\in\mathbb{R}$ for all $A\subset\Lambda$. The expectation value at inverse temperature $\beta$ of a functional $f$ on the configuration space is
\begin{equation}
\langle f\rangle^{\rm cl}_{\Lambda,\beta} = \frac{1}{Z^{\rm cl}_{\Lambda,\beta}}\int d\phi \,\mbox{e}^{-\beta H^{\rm{cl}}_\Lambda(\phi) }\,f(\phi),
\end{equation}
where $Z^{\rm cl}_{\Lambda,\beta} = \int d \phi \,\mbox{e}^{-\beta H^{\rm{cl}}_\Lambda(\phi) }$ is the partition function and $\int d\phi = \int_0^{2\pi}\dots\int_{0}^{2\pi}\prod_{x\in\Lambda}\frac{d\phi_x}{2\pi}$.

We now define the quantum XY model. We restrict ourselves to the spin-$\frac{1}{2}$ case. As before, the model is defined on a finite set of sites $\Lambda$; the Hilbert space is $\caH_\Lambda^\qu = \otimes_{x\in\Lambda} \bbC^2$. The spin operators acting on $\mathbb{C}^{2}$ are the three hermitian matrices $S^i$, $i=1,2,3$, that satisfy $\left[S^1, S^2\right] = i S^3$ and its cyclic permutations, and 
$(S^1)^2+(S^2)^2+(S^3)^2 =~ \tfrac{3}{4} \mathds{1}$. They are explicitly formulated in terms of Pauli matrices:
\be
S^1 = \tfrac{1}{2}\left(\begin{array}{cc}0 &1\\1 &0\end{array}\right),\;\;\;\;S^2 = \tfrac{1}{2}\left(\begin{array}{cc}0 &-i\\i &0\end{array}\right),\;\;\;\;S^3 = \tfrac{1}{2}\left(\begin{array}{cc}1 &0\\0 &-1\end{array}\right).
\ee
The hamiltonian describing the interaction is
\begin{equation}
H^\qu_\Lambda = -\sum_{A\subset \Lambda}J^1_A \prod_{x\in A}S^1_x + J^2_A \prod_{x\in A}S^2_A,\label{quantumXY}
\end{equation}
with $S^i_x =  S^i\otimes \mathds{1}_{\Lambda\backslash \{x\}}$. The $\{J^i_A\}_{A\subset\Lambda}$ are nonnegative coupling constants. The Gibbs state at inverse temperature $\beta$ is
\be
\langle \mathcal{O} \rangle^\qu_{\Lambda,\beta} = \frac{1}{Z^\qu_{\Lambda,\beta}}\Tr \mathcal{O}\e{-\beta H^\qu_\Lambda},
\ee
with $Z^\qu_{\Lambda,\beta} = \Tr \e{-\beta H^\qu_\Lambda}$ the partition function and $\mathcal{O}$ any operator acting on $\mathcal{H}^\qu_\Lambda$.

The first result holds for both classical and quantum models.

\begin{thm} Assume that $J^1_A , J^2_A \geq 0$ for all $A \subset \Lambda$. The following inequalities hold true for all $X,Y \subset \Lambda$, and for all $\beta>0$.
	\begin{align*}
	&Classical:&&\Bigl\langle \prod_{x \in X} \sigma_x^1 \prod_{x \in Y} \sigma_x^1 \Bigr\rangle^\cl_{\Lambda,\beta}- \Bigl\langle \prod_{x \in X} \sigma_x^1 \Bigr\rangle^\cl_{\Lambda,\beta} \; \Bigl\langle \prod_{x \in Y} \sigma_x^1 \Bigr\rangle_{\Lambda,\beta}^\cl \geq 0,\\
	&&&\Bigl\langle \prod_{x \in X} \sigma_x^1 \prod_{x \in Y} \sigma_x^2 \Bigr\rangle^\cl_{\Lambda,\beta} - \Bigl\langle \prod_{x \in X} \sigma_x^1 \Bigr\rangle^\cl_{\Lambda,\beta} \; \Bigl\langle \prod_{x \in Y} \sigma_x^2 \Bigr\rangle^\cl_{\Lambda,\beta} \leq 0.\\
	&Quantum:&&\Bigl\langle \prod_{x \in X} S_x^1 \prod_{x \in Y} S_x^1 \Bigr\rangle^\qu_{\Lambda,\beta} - \Bigl\langle \prod_{x \in X} S_x^1 \Bigr\rangle^\qu_{\Lambda,\beta} \; \Bigl\langle \prod_{x \in Y} S_x^1 \Bigr\rangle^\qu_{\Lambda,\beta} \geq 0, \\
	&&&\Bigl\langle \prod_{x \in X} S_x^1 \prod_{x \in Y} S_x^2 \Bigr\rangle^\qu_{\Lambda,\beta} - \Bigl\langle \prod_{x \in X} S_x^1 \Bigr\rangle^\qu_{\Lambda,\beta} \; \Bigl\langle \prod_{x \in Y} S_x^2 \Bigr\rangle^\qu_{\Lambda,\beta} \leq 0.
	\end{align*}
	\label{XYGriffiths}
\end{thm}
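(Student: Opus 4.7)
My plan is to apply Ginibre's duplicated-variable trick. Introduce an independent copy of the system with angles $\phi'=\{\phi'_x\}_{x\in\Lambda}$, so that for any two observables $F,G$,
\[
(Z^{\cl}_{\Lambda,\beta})^2\bigl(\langle FG\rangle^{\cl}_{\Lambda,\beta}-\langle F\rangle^{\cl}_{\Lambda,\beta}\langle G\rangle^{\cl}_{\Lambda,\beta}\bigr)=\tfrac12\int d\phi\,d\phi'\,e^{-\beta H^{\cl}_\Lambda(\phi)-\beta H^{\cl}_\Lambda(\phi')}\bigl(F(\phi)-F(\phi')\bigr)\bigl(G(\phi)-G(\phi')\bigr).
\]
Change variables to $u_x=\tfrac12(\phi_x+\phi'_x)$, $v_x=\tfrac12(\phi_x-\phi'_x)$ and use the angle-addition formulas $\cos(u\pm v)=\cos u\cos v\mp\sin u\sin v$, $\sin(u\pm v)=\sin u\cos v\pm\cos u\sin v$. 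The symmetric combinations that appear in the Hamiltonian satisfy
\[
\prod_{x\in A}\sigma^i_x(\phi)+\prod_{x\in A}\sigma^i_x(\phi')=2\!\!\!\sum_{\substack{R\subset A\\|R|\text{ even}}}\!\!\!\prod_{x\in A\setminus R}\!\!\alpha^i_x\prod_{x\in R}\beta^i_x\quad(i=1,2),
\]
where $(\alpha^1_x,\beta^1_x)=(\cos u_x\cos v_x,\sin u_x\sin v_x)$ and $(\alpha^2_x,\beta^2_x)=(\sin u_x\cos v_x,\cos u_x\sin v_x)$. Because $\beta J^i_A\geq 0$, the whole Boltzmann factor $e^{-\beta H^{\cl}_\Lambda(\phi)-\beta H^{\cl}_\Lambda(\phi')}$ expands as a power series in $\cos u_x,\sin u_x,\cos v_x,\sin v_x$ with nonnegative coefficients.

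The antisymmetric combinations are
\[
\prod_{x\in X}\sigma^1_x(\phi)-\prod_{x\in X}\sigma^1_x(\phi')=-2\!\!\sum_{\substack{S\subset X\\|S|\text{ odd}}}\!\prod_{x\in X\setminus S}\!\alpha^1_x\prod_{x\in S}\beta^1_x,\qquad \prod_{y\in Y}\sigma^2_y(\phi)-\prod_{y\in Y}\sigma^2_y(\phi')=+2\!\!\sum_{\substack{T\subset Y\\|T|\text{ odd}}}\!\prod_{y\in Y\setminus T}\!\alpha^2_y\prod_{y\in T}\beta^2_y,
\]
so the product $(F-F')(G-G')$ factors as an overall sign $+4$ when both $F,G$ are cosine products and $-4$ when one is a cosine and the other a sine product, times a polynomial with nonnegative coefficients. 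The final ingredient is that every single-site integral $\int_0^{2\pi}\cos^p\xi\sin^q\xi\,\frac{d\xi}{2\pi}$ is nonnegative (and vanishes unless $p,q$ are both even). Summing term by term then yields the two classical inequalities with the stated signs.

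\textbf{Quantum case.} My plan is to reduce to the classical case through a positive graphical representation. Set $K^i_A=\prod_{x\in A}S^i_x$ and use the Lie--Trotter formula
\[
\Tr\mathcal{O}\,e^{-\beta H^{\qu}_\Lambda}=\lim_{N\to\infty}\Tr\mathcal{O}\,\Bigl(\prod_A e^{\tfrac{\beta}{N}J^1_A K^1_A}e^{\tfrac{\beta}{N}J^2_A K^2_A}\Bigr)^N.
\]
Inserting resolutions of the identity in the $\{S^3_x\}$-eigenbasis between successive factors expresses the left-hand side as a sum over sequences of $\pm\tfrac12$-configurations, weighted by matrix elements of the local exponentials. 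Matrix elements of $K^1_A$ in this basis are $2^{-|A|}$ times indicators of configurations differing exactly on $A$, hence nonnegative; those of $K^2_A$ carry factors $(\pm i)^{|A|}$ that can be absorbed by a gauge rotation $S^2_x\mapsto -S^2_x$ on a suitable sublattice, producing nonnegative weights as well. This discretises the quantum expectation into a classical statistical measure on sequences of $S^3$-configurations with nonnegative weights, in which $\prod_X S^1$ and $\prod_Y S^2$ become classical observables. Applying the duplicated-variable argument to two independent replicas of this discrete model, with the same sign bookkeeping as in the classical case, produces the two quantum inequalities.

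\textbf{Main obstacle.} The classical argument is essentially a careful sign-counting exercise once the duplicated variables are in place. The real difficulty is quantum: I must construct the gauge transformation that converts the phases in the matrix elements of $K^2_A$ into real nonnegative weights for \emph{all} subsets $A$ simultaneously, and check its compatibility with the $K^1_A$ terms. Establishing this positivity of the graphical representation is the step I expect to demand the most work; once secured, the Ginibre duplicated-variable argument transfers from the continuous classical setting to the discrete graphical one with only formal changes.
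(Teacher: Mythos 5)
Your classical argument is correct, but it is not the paper's route: you are reproducing Ginibre's original duplicated-variable proof (half-sum/half-difference angles, expansion of the Boltzmann factor with nonnegative coefficients, positivity of the single-site integrals $\int_0^{2\pi}\cos^p\xi\sin^q\xi\,d\xi$), whereas the paper follows Kunz--Pfister--Vuillermot: each spin is written as $\sigma^1_x=\cos(\theta_x)U_x$, $\sigma^2_x=\sin(\theta_x)V_x$ with $\theta_x\in[0,\tfrac\pi2]$ and Ising variables $U_x,V_x$; at fixed $\theta$ the measure is a product of two ferromagnetic Ising models to which Griffiths' inequalities apply, and the average over $\theta$ is handled by the FKG inequality for the density $p(\theta)\propto Z^{\rm Is}_{\Lambda,\{\mathscr{J}_A(\theta)\}}Z^{\rm Is}_{\Lambda,\{\mathscr{K}_A(\theta)\}}$. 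Your route is self-contained and avoids FKG; the paper's route has the advantage that the same decomposition immediately yields the XY--Ising comparison of Theorem \ref{XYvsIsing}. One small point you should make explicit: the change of variables maps $[0,2\pi]^2$ onto a diamond, and you need periodicity of the integrand to replace that domain by a product of full-period single-site integrals.

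The quantum half has a genuine gap, and it sits exactly where you locate it. The sublattice gauge rotation cannot positivize the $S^2$ matrix elements: in the $S^3$ basis one has $\langle -|S^2|+\rangle=i/2$ and $\langle +|S^2|-\rangle=-i/2$, so the phase attached to a factor of $S^2_x$ in the Trotter expansion depends on whether the spin at $x$ is raised or lowered at that time slice --- on the local transition, not on the site --- and no assignment of signs to sites removes it. Indeed no single-site unitary makes $S^1$ and $S^2$ simultaneously entrywise nonnegative on $\bbC^2$; that is precisely the obstruction the theorem must overcome. Even after the harmless global rotation $S^2\mapsto S^3$ (which does give nonnegative Trotter weights, since $\prod_{x\in A}S^1_x$ is entrywise nonnegative and $\prod_{x\in A}S^3_x$ is diagonal), the observables $\prod_X S^1$ and $\prod_Y S^3$ are respectively off-diagonal and diagonal, so they are not functions on a single classical configuration space, and your ``same sign bookkeeping'' has no analogue of the factorization of $\cos(M\cdot\phi)\pm\cos(M\cdot\phi')$ into products over half-sum and half-difference angles that drives the classical proof; transferring the duplication argument to the discrete representation is the entire content of the theorem and is left unproved. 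The paper instead duplicates the Hilbert space rather than the time direction: setting $\mathcal{O}_\pm=\mathcal{O}\otimes\mathds{1}\pm\mathds{1}\otimes\mathcal{O}$, one has $\langle\mathcal{O}\mathcal{P}\rangle^\qu_\Lambda-\langle\mathcal{O}\rangle^\qu_\Lambda\langle\mathcal{P}\rangle^\qu_\Lambda=\tfrac12\langle\langle\mathcal{O}_-\mathcal{P}_-\rangle\rangle$, and there is an explicit basis of $\bbC^2\otimes\bbC^2$ in which $S^1_+$, $S^1_-$, $S^3_+$ and $-S^3_-$ are all entrywise nonnegative; a Taylor expansion of $\e{-H^\qu_{\Lambda,+}}$ combined with $(\mathcal{O}\mathcal{P})_\pm=\tfrac12(\mathcal{O}_+\mathcal{P}_\pm+\mathcal{O}_-\mathcal{P}_\mp)$ then exhibits $\Tr\,(S^1_X)_-(S^1_Y)_-\e{-H^\qu_{\Lambda,+}}$ as a sum with nonnegative coefficients of traces of entrywise nonnegative matrices, and similarly with $(S^3_Y)_-$ for the reversed inequality. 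You should replace the Trotter construction by this doubling.
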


In the quantum case, similar inequalities hold for Schwinger functions, see \cite{BLU} for details.
The proofs are given in Sections \ref{secproofs}, \ref{quantXYproofs} respectively. These inequalities are known as Ginibre inequalities --- first introduced  by Griffiths  for the Ising model \cite{Gri} and systematised in a seminal work by Ginibre \cite{Gin}, which provides a general framework for inequalities of this form. Ginibre inequalities for the classical XY model have then been established with different techniques \cite{Gin, KPV75, Mon, KPV76, MP}. The equivalent result for the quantum case has been proved with different approaches \cite{Gal,Suz,PM2,BLU}. An extension to the ground state of quantum systems with spin 1 was proposed in \cite{BLU}.  A straightforward corollary of this theorem is monotonicity with respect to coupling constants, as we see now.

\begin{cor}
	\label{Correlation monotonicity}
	\label{dS/dJ}
	Assume that $J^1_A , J^2_A \geq 0$ for all $A \subset \Lambda$. Then for all $X,Y \subset \Lambda$, and for all $\beta>0$
	\begin{align*}
	&Classical:&\mbox{\hspace{-2cm}}\frac{\partial}{\partial J^1_Y}\langle \prod_{x\in X}\sigma^1_x \rangle_{\Lambda,\beta}^\cl \geq 0,\\
	&&\mbox{\hspace{-2cm}}\frac{\partial}{\partial J_Y^2}\langle \prod_{x\in X}\sigma^1_x \rangle_{\Lambda,\beta}^\cl \leq 0.\\
	&Quantum:&\mbox{\hspace{-2cm}}\frac{\partial}{\partial J^1_Y}\langle \prod_{x\in X}S^1_x \rangle_{\Lambda,\beta}^\qu \geq 0,\\
	&&\mbox{\hspace{-2cm}}\frac{\partial}{\partial J_Y^2}\langle \prod_{x\in X}S^1_x \rangle_{\Lambda,\beta}^\qu \leq 0.\\
	\end{align*}
\end{cor}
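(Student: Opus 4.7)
The plan is to differentiate $\langle \prod_{x \in X}\sigma^1_x \rangle^\cl_{\Lambda,\beta}$ (resp.\ $\langle \prod_{x \in X}S^1_x \rangle^\qu_{\Lambda,\beta}$) with respect to $J^i_Y$, identify the result as a truncated two-point function of the kind controlled by Theorem \ref{XYGriffiths} (or, in the quantum case, by its Schwinger function generalisation), and read off the sign.

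In the classical case, exchanging the $J^i_Y$-derivative with the integral over $\phi$ is immediate and yields, for any bounded functional $f$,
\[
\frac{\partial}{\partial J^i_Y} \langle f \rangle^\cl_{\Lambda,\beta} = \beta \Bigl[ \Bigl\langle f \prod_{x \in Y}\sigma^i_x \Bigr\rangle^\cl_{\Lambda,\beta} - \langle f \rangle^\cl_{\Lambda,\beta} \Bigl\langle \prod_{x \in Y}\sigma^i_x \Bigr\rangle^\cl_{\Lambda,\beta} \Bigr].
\]
Specialising to $f = \prod_{x \in X}\sigma^1_x$, the $i=1$ right-hand side is exactly the first classical truncated correlation of Theorem \ref{XYGriffiths} (hence $\geq 0$), and the $i=2$ right-hand side is the second (hence $\leq 0$).

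In the quantum case, $\prod_{x \in Y}S^i_x$ does not in general commute with $H^\qu_\Lambda$, so one must invoke Duhamel's identity $\partial_\lambda \e{-\beta H^\qu_\Lambda} = -\int_0^\beta \e{-s H^\qu_\Lambda}(\partial_\lambda H^\qu_\Lambda) \e{-(\beta-s) H^\qu_\Lambda} ds$. Combined with the cyclicity of the trace, and writing $A := \prod_{x \in X} S^1_x$ and $A(s) := \e{s H^\qu_\Lambda} A \e{-s H^\qu_\Lambda}$, this produces
\[
\frac{\partial}{\partial J^i_Y} \langle A \rangle^\qu_{\Lambda,\beta} = \int_0^\beta \biggl[ \Bigl\langle A(s) \prod_{x \in Y} S^i_x \Bigr\rangle^\qu_{\Lambda,\beta} - \langle A \rangle^\qu_{\Lambda,\beta} \Bigl\langle \prod_{x \in Y} S^i_x \Bigr\rangle^\qu_{\Lambda,\beta} \biggr] ds.
\]
The right-hand side is precisely the truncated Duhamel (or Schwinger) two-point function alluded to immediately after Theorem \ref{XYGriffiths}; the Ginibre inequality for such Schwinger functions, established in \cite{BLU}, supplies the required sign ($\geq 0$ when $i = 1$, $\leq 0$ when $i = 2$).

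The only real subtlety lies in the quantum case: the derivative does not produce the plain truncated correlation stated in Theorem \ref{XYGriffiths} but rather its imaginary-time integrated Duhamel counterpart, so Theorem \ref{XYGriffiths} alone is not literally enough and one genuinely needs the Schwinger function generalisation of \cite{BLU}. Once this input is granted, the argument reduces to a routine sign check.
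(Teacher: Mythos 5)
Your proof is correct and follows the route the paper intends: the corollary is stated there as a direct consequence of Theorem \ref{XYGriffiths}, with the classical derivative producing exactly the truncated correlations of that theorem, and the quantum derivative producing the Duhamel/Schwinger two-point function covered by the remark immediately after Theorem \ref{XYGriffiths} and by \cite{BLU}. You are right to flag that in the quantum case the plain statement of Theorem \ref{XYGriffiths} is not literally sufficient and the Schwinger-function generalisation is genuinely needed --- this is precisely the point the paper addresses only implicitly via that remark, so your write-up is, if anything, more careful than the original.
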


Interestingly this result appears to be not trivially true for the quantum Heisenberg ferromagnet.  Indeed a toy version of the fully SU(2) invariant  model has been provided explicitly, for which this result does not hold  (nearest neighbours interaction on a three-sites chain with open boundary conditions) \cite{HS}. The question whether this result might still be established in a proper setting is still open. On the other hand, Ginibre inequalities have been proved for the classical Heisenberg ferromagnet \cite{KPV75,Dun,  MP}.

Monotonicity of correlations with respect to temperature does not follow straightforwardly from the corollary. This can nonetheless be proved for the classical XY model.

\begin{thm}\hfill
	
	Classical model: Assume that $J^1_A\geq|J^2_A|$ for all $A \subset \Lambda$, and that $J^2_A=0$ whenever $|A|$ is odd. Then for all $A,B \subset \Lambda$, we have
	\be
	\frac\partial{\partial \beta} \Bigl\langle \prod_{x \in B} \sigma_x^1 \Bigr\rangle^{\cl}_{\Lambda,\beta} \geq 0.
	\nonumber
	\ee
	Let us restrict to the two-body case and assume that $H^{\cl}_\Lambda$ is given by
	\be
	H^{\cl}_\Lambda = -\sum_{x,y\in\Lambda}J_{xy}\left(\sigma^1_x\sigma^1_y + \eta_{xy}\sigma^2_x\sigma^2_y\right).\label{ham2body}\nonumber\ee
	Then if $|\eta_{xy}|\leq 1$ for all $x,y$,
	\be
	\frac{\partial}{\partial J_{xy}} \langle\prod_{z\in A}\sigma^1_z\rangle^\cl_{\Lambda,\beta}\geq 0.
	\ee
	\label{monotoneXY}
\end{thm}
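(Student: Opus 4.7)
The plan is to differentiate, express the result as a linear combination of truncated correlations indexed by $A\subset\Lambda$, and use Theorem~\ref{XYGriffiths} together with the sign hypotheses to split each $A$-summand into a manifestly nonnegative piece and one ``mixed'' residual whose positivity requires a small additional inequality. Explicitly,
\[
\frac{\partial}{\partial\beta}\Bigl\langle\prod_{x\in B}\sigma_x^1\Bigr\rangle^{\cl}_{\Lambda,\beta} \;=\; \sum_{A\subset\Lambda}\bigl[\,J^1_A\,\tau^1_A + J^2_A\,\tau^2_A\,\bigr],
\]
where $\tau^i_A = \langle\prod_B\sigma^1\prod_A\sigma^i\rangle^{\cl}-\langle\prod_B\sigma^1\rangle^{\cl}\langle\prod_A\sigma^i\rangle^{\cl}$ satisfies $\tau^1_A\ge 0$ and $\tau^2_A\le 0$ by Theorem~\ref{XYGriffiths}. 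For $|A|$ odd, $J^2_A=0$ and the $A$-term is trivially nonneg; for $|A|$ even I use $J^1_A\ge|J^2_A|$ to split
\[
J^1_A\tau^1_A+J^2_A\tau^2_A \;=\; (J^1_A-|J^2_A|)\,\tau^1_A \;+\; |J^2_A|\bigl(\tau^1_A+\mathrm{sgn}(J^2_A)\tau^2_A\bigr),
\]
in which the first summand is nonneg and the second is automatic when $J^2_A\le 0$. The only outstanding case is $J^2_A>0$ with $|A|$ even, for which one needs $\tau^1_A+\tau^2_A\ge 0$.

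The key input for this residual inequality is the elementary identity, valid for $|A|$ even,
\[
\prod_{x\in A}\cos\phi_x\pm\prod_{x\in A}\sin\phi_x \;=\; \frac{1}{2^{|A|-2}}\sum_{\epsilon}\cos\Bigl(\textstyle\sum_{x\in A}\epsilon_x\phi_x\Bigr),
\]
the sum running over $\epsilon\in\{\pm 1\}^A$ with $\epsilon_{x_0}=+1$ and $\prod_x\epsilon_x=\pm(-1)^{|A|/2}$; all coefficients are nonneg. Thus $\prod_A\sigma^1\pm\prod_A\sigma^2$ (for $|A|$ even) and $\prod_B\sigma^1$ are nonneg combinations of the ``characters'' $\cos(\ell\cdot\phi)$ with $\ell\in\mathbb{Z}^\Lambda$, and a case-by-case splitting of each $A$-term of $H^{\cl}$ according to the sign of $J^2_A$ rewrites the hamiltonian as $H^{\cl}=-\sum_\alpha K_\alpha\cos(\ell_\alpha\cdot\phi)$ with $K_\alpha\ge 0$. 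In such a ``rotor'' hamiltonian the residual inequality is a Ginibre inequality between two character observables, provable by duplication: with $\theta=(\phi+\phi')/2$, $\psi=(\phi-\phi')/2$ one has $\cos(\ell\cdot\phi)+\cos(\ell\cdot\phi')=2\cos(\ell\cdot\theta)\cos(\ell\cdot\psi)$ and $\cos(\ell\cdot\phi)-\cos(\ell\cdot\phi')=-2\sin(\ell\cdot\theta)\sin(\ell\cdot\psi)$, so termwise Taylor expansion of $e^{-\beta[H(\phi)+H(\phi')]}$ makes the integrand factor into identical functions of $\theta$ and of $\psi$ whose integrals combine into a perfect square.

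The second assertion is the two-body specialization. Differentiating gives $\partial_{J_{xy}}\langle\prod_{z\in A}\sigma^1_z\rangle^{\cl}=\beta(\tau^1_{\{x,y\}}+\eta_{xy}\tau^2_{\{x,y\}})$; since $\tau^2_{\{x,y\}}\le 0$ and $|\eta_{xy}|\le 1$ we have $\eta_{xy}\tau^2_{\{x,y\}}\ge\tau^2_{\{x,y\}}$, so the derivative is bounded below by $\beta(\tau^1_{\{x,y\}}+\tau^2_{\{x,y\}})$, which is nonneg by the $|A|=2$ instance of the character inequality of the previous paragraph (the two-body hamiltonian $-\sum J_{xy}(\sigma^1\sigma^1+\eta_{xy}\sigma^2\sigma^2)$ being itself of rotor form under $|\eta_{xy}|\le 1$). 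The main obstacle is establishing the character-level Ginibre inequality: while the computational structure parallels Theorem~\ref{XYGriffiths}, one has to confirm that the Taylor coefficients, the ``square'' factorization after duplication, and the parity condition in the rewrite identity all combine coherently, and the evenness of $|A|$ is precisely what makes the argument go through.
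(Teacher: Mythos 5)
Your proposal is in essence the paper's own proof: the rewrite of $H^{\cl}_\Lambda$ as $-\sum_M K_M\cos(M\cdot\phi)$ with $K_M\ge 0$ (the paper's Lemma \ref{newform}, via exactly your product-to-sum identity), followed by the character-level Ginibre inequality $\langle\cos(M\cdot\phi)\cos(N\cdot\phi)\rangle-\langle\cos(M\cdot\phi)\rangle\langle\cos(N\cdot\phi)\rangle\ge0$ proved by duplication and the perfect-square factorisation (Lemma \ref{positivecorr} and Eq.~\eqref{2ndgriffithscos}), and the two-body case handled through the couplings $K^\pm_{xy}=\tfrac{J_{xy}}{2}(1\mp\eta_{xy})\ge0$. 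The one genuine flaw is your invocation of Theorem \ref{XYGriffiths} to get $\tau^1_A\ge0$ and $\tau^2_A\le0$: that theorem assumes $J^1_A,J^2_A\ge0$, whereas the present hypothesis $J^1_A\ge|J^2_A|$ allows $J^2_A<0$, in which case Theorem \ref{XYGriffiths} does not apply (and no gauge transformation fixes the sign of a single many-body $J^2_A$). The same issue affects your two-body step $\eta_{xy}\tau^2_{\{x,y\}}\ge\tau^2_{\{x,y\}}$, which presupposes $\tau^2\le0$. Fortunately the repair lies entirely inside your own argument: your character inequality gives $\tau^1_A\ge0$ and $\tau^1_A\pm\tau^2_A\ge0$ directly, because $\sigma^1_A$ and $\sigma^1_A\pm\sigma^2_A$ (for $|A|$ even) are nonnegative combinations of characters; so replace your sgn-splitting by the convex decomposition $J^1_A\tau^1_A+J^2_A\tau^2_A=\tfrac{J^1_A+J^2_A}{2}(\tau^1_A+\tau^2_A)+\tfrac{J^1_A-J^2_A}{2}(\tau^1_A-\tau^2_A)$, whose coefficients are nonnegative precisely under $J^1_A\ge|J^2_A|$ --- this is exactly the paper's route, and its two-body instance is the $K^\pm_{xy}$ decomposition. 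With that substitution (and noting, in the duplication step, that products containing an odd number of ``minus-type'' factors integrate to zero, so the sign of $-2\sin\sin$ is harmless), your proof is complete and coincides with the paper's.
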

This result has been proposed and discussed in various works \cite{Gin, MP,MMP} --- see Section \ref{secproofs} for the details. Unfortunately we lack a quantum equivalent of these statements.

We conclude this section by remarking that correlation inequalities in the quantum case can be applied also to other models of interest.  For example, we consider a certain formulation of Kitaev's model (see \cite{Kit} for its original formulation and \cite{Bach} for a review of the topic). Let $\Lambda \subset \subset \bbZ^2$ be a square lattice with edges $\mathcal{E}_\Lambda$. Each {\itshape edge} of the lattice hosts a spin, i.e. the hilbert space of this model is $\caH^{\rm{Kitaev}}_\Lambda = \otimes_{e\in\mathcal{E}_\Lambda}\bbC^2$.  The Kitaev hamiltonian is
\be
H^{{\rm Kitaev}}_\Lambda = -\sum_{x\in\Lambda} J_x\prod_{\substack{e\in\mathcal{E}_\Lambda:\\x\in e}}S^1_e -\sum_{F\subset\Lambda} J_{F} \prod_{e\subset F} S^3_e,
\ee
where $F$ denotes the faces of the lattice, i.e the unit squares which are the building blocks of $\bbZ^2$, $J_x,\, J_F$ are ferromagnetic coupling constants and $S^i_e = S^i\otimes\mathds{1}_{\mathcal{E}_\Lambda \backslash e}$. $H^{{\rm Kitaev}}_\Lambda$  has the same structure as hamiltonian \eqref{quantumXY} so Ginibre inequalities apply as well. It is not clear, though, whether this might lead to useful results for the study of this specific model.

Another relevant model is the {\it plaquette orbital model} that was studied in \cite{WJ,BK};
interactions between neighbours $x,y$ are of the form $-S_x^i S_y^i$, with $i$ being
equal to 1 or 3 depending on the edge.

\section{Comparison between Ising and XY models}

We now compare the correlations of the Ising and XY models and their respective critical temperatures. The configuration space of the Ising model is $\Omega_\Lambda^{\rm{Is}} = \{-1,1\}^\Lambda$, that is, Ising configurations are given by $\{s_x\}_{x\in\Lambda}$ with $s_x=\pm1$ for each $x\in\Lambda$. We consider many-body interactions, so the energy of a configuration $s\in\Omega_\Lambda^{\rm{Is}}$ is
\be
H_{\Lambda,\{J_A\}}^{\rm{Is}}(s) = -\sum_{A\subset\Lambda} J_A\prod_{x\in A} s_x;
\label{Ising}
\ee
we assume that the system is ``ferromagnetic", i.e.\ the coupling constants $J_A\geq 0$ are nonnegative. The Gibbs state at inverse temperature $\beta$ is
\be
\langle f \rangle^{\rm{Is}}_{\Lambda,\{J_A\},\beta} = \frac{1}{Z_{\Lambda,\{J_A\},\beta}^{\rm{Is}}}\sum _{s\in\Omega^{\rm{Is}}_\Lambda}f(s) \e{-\beta H^{\rm{Is}}_{\Lambda,\{J_A\}}},
\ee
with $f$ any functional on $\Omega_\Lambda^{\rm{Is}}$ and $Z_{\Lambda,\{J_A\},\beta}^{\rm{Is}} = \sum_{s\in\Omega^{\rm{Is}}_{\Lambda}} \e{-\beta H^{\rm{Is}}_{\Lambda,\{J_A\}}}$ is the partition function.
The following result holds for both the classical \cite{KPV76} and the quantum case \cite{Suz,Pe}.

\begin{thm}\label{XYvsIsing} Assume that $J^1_A,J^2_A\geq 0$ for all $A\subset\Lambda$. Then for all $X\subset\Lambda$ and all $\beta>0$,
	\[
	\begin{split}
	\text{Classical:} \qquad &\langle \prod_{x\in X}\sigma^1_x\rangle^{\cl}_{\Lambda,\beta} \leq \langle \prod_{x\in X}s_x\rangle^{{\rm Is}}_{\Lambda,\{J^1_A\},\beta}.\\
	\text{Quantum:}  \qquad &\langle \prod_{x\in X}S^1_x\rangle^{\qu}_{\Lambda,\beta} \leq 2^{-|X|}\langle \prod_{x\in X}s_x\rangle^{\rm{Is}}_{\Lambda,\{J^*_A\},\beta},
	\end{split}
	\]
	with $J^*_A = 2^{-|A|}J^1_A$.
\end{thm}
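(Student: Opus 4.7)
The plan is to reduce, in both the classical and quantum cases, to the subcase $J^2_A = 0$ for every $A\subset\Lambda$ by exploiting the monotonicity of Corollary~\ref{dS/dJ}, and then to identify the resulting ``$\sigma^1$-only'' model directly with a (rescaled) ferromagnetic Ising model.

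\emph{Classical.} Corollary~\ref{dS/dJ} gives $\partial_{J^2_A}\langle\prod_{x\in X}\sigma^1_x\rangle^{\cl}\le 0$, so the correlation is maximised when every $J^2_A$ is set to $0$. In that subcase the Hamiltonian depends only on the $\sigma^1_x=\cos\phi_x\in[-1,1]$, and I would write $\sigma^1_x = s_x t_x$ with $s_x=\operatorname{sgn}(\cos\phi_x)\in\{-1,+1\}$ and $t_x=|\cos\phi_x|\in[0,1]$. Under the uniform a priori measure on the $\phi_x$ the signs $(s_x)$ are i.i.d.\ uniform on $\{-1,+1\}$ and independent of the magnitudes $t=(t_x)$, so conditioning on $t$ the conditional law of $s$ under the XY Gibbs measure is precisely a ferromagnetic Ising measure with couplings $\tilde J_A(t):=J^1_A\prod_{x\in A}t_x \le J^1_A$. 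This yields a representation of the form
\[
\langle\prod_{x\in X}\sigma^1_x\rangle^{\cl}_{\Lambda,\beta,\,J^2\equiv 0}
 = \int d\rho(t)\,\Bigl(\prod_{x\in X}t_x\Bigr)\,\langle\prod_{x\in X}s_x\rangle^{\rm{Is}}_{\Lambda,\{\tilde J_A(t)\},\beta}
\]
for some probability measure $\rho$ on $[0,1]^\Lambda$. By Griffiths' second inequality (the pure-Ising instance of Theorem~\ref{XYGriffiths}), the Ising correlation is nonnegative and nondecreasing in each coupling; combined with $t_x\le 1$ this bounds the integrand pointwise by $\langle\prod_{x\in X}s_x\rangle^{\rm{Is}}_{\Lambda,\{J^1_A\},\beta}$, and integrating against $\rho$ finishes the classical half.

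\emph{Quantum.} Again Corollary~\ref{dS/dJ} reduces matters to $J^2_A\equiv 0$, where $H^\qu_\Lambda = -\sum_A J^1_A \prod_{x\in A} S^1_x$ is a sum of mutually commuting operators, all simultaneously diagonal in the product eigenbasis of $\{S^1_x\}_{x\in\Lambda}$. Labelling basis vectors by $s\in\{-1,+1\}^\Lambda$ with $S^1_x|s\rangle=\tfrac{s_x}{2}|s\rangle$, one has $\prod_{x\in A}S^1_x|s\rangle = 2^{-|A|}\prod_{x\in A}s_x\,|s\rangle$ on every basis state. The trace therefore collapses to the Ising partition function with couplings $J^*_A=2^{-|A|}J^1_A$, and
\[
\langle\prod_{x\in X}S^1_x\rangle^{\qu}_{\Lambda,\beta,\,J^2\equiv 0} = 2^{-|X|}\,\langle\prod_{x\in X}s_x\rangle^{\rm{Is}}_{\Lambda,\{J^*_A\},\beta},
\]
which is the stated bound.

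The only substantive step is the classical identification of the conditional law of the signs given the magnitudes as a ferromagnetic Ising measure; this relies on the factorisation $\prod_{x\in A}\sigma^1_x = (\prod_{x\in A}s_x)(\prod_{x\in A}t_x)$ together with the product/symmetric structure of the a priori measure on $\phi$. Once this is in hand, Griffiths' second inequality and $t_x\le 1$ yield the pointwise domination. The quantum case is by contrast essentially immediate, the bulk of the work being absorbed into Corollary~\ref{dS/dJ} and into the observation that all the operators $\prod_{x\in A}S^1_x$ share a common eigenbasis.
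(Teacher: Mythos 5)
Your proof is correct, and it is worth separating the two halves. On the classical side your argument is in substance the paper's own: the decomposition $\sigma^1_x=s_xt_x$ into an Ising sign and a magnitude in $[0,1]$ is exactly the paper's substitution $\sigma^1_x=\cos(\theta_x)U_x$ with $\theta_x\in[0,\tfrac{\pi}{2}]$, your conditional Ising measure with couplings $\tilde J_A(t)=J^1_A\prod_{x\in A}t_x$ is its measure with couplings $\mathscr{J}_A(\theta)$, and the final bound is the same combination of $t_x\le 1$, nonnegativity, and monotonicity of Ising correlations in the couplings (Corollary~\ref{Isingmonotone}). The one genuine difference is that you first set $J^2_A\equiv 0$ by integrating the inequality of Corollary~\ref{dS/dJ}, whereas the paper keeps the $J^2_A$ and lets the second family of Ising spins $V_x$ and the $\sin\theta$ couplings simply factor out of the ratio of integrals; your reduction buys slightly lighter bookkeeping at the price of invoking Theorem~\ref{XYGriffiths} through its corollary, which the paper's classical proof does not need. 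On the quantum side you give precisely the spin-$\tfrac12$ shortcut that the paper mentions in passing (``a straightforward consequence of Corollary~\ref{dS/dJ}'') but deliberately does not write out: Section~\ref{quantXYproofs} instead reproduces Pearce's argument via a composite quantum--Ising system and a site-wise unitary, because that proof extends to arbitrary spin $\mathcal S$. For the theorem as stated, which is for spin-$\tfrac12$, your observation that at $J^2\equiv 0$ the operators $\prod_{x\in A}S^1_x$ are simultaneously diagonal with eigenvalues $2^{-|A|}\prod_{x\in A}s_x$, so that the bound becomes an equality against the Ising model with couplings $J^*_A=2^{-|A|}J^1_A$, is perfectly adequate and is arguably the most elementary route.
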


A review of the proof of the classical case is proposed in Section \ref{secproofs}.
In the quantum case, this statement for spin-$\frac{1}{2}$ is a straightforward consequence of Corollary \ref{dS/dJ}, but interestingly this result has been extended to any value of the spin \cite{Pe}. We review the proof of this general case in Section \ref{quantXYproofs}.

We now consider the case of spin-$\frac12$ and pair interactions, that is, the hamiltonian is
\be
H_{\Lambda}^{\qu}=-\sum_{ x,y\in\Lambda}(S_x^1S_y^1+ S_x^2S_y^2).
\ee

We define the \emph{spontaneous magnetisation} $m(\beta)$ at inverse temperature $\beta$ by
\be
m(\beta)^2 =\liminf_{\Lambda \nearrow \bbZ^d}\frac{1}{|\Lambda|^2}\sum_{x,y\in\Lambda}\langle S_x^1S_y^1\rangle_{\Lambda,\beta}^\qu.
\ee
We define the critical temperature for the model $T_{\rm c}^\qu = 1/\beta_{\rm c}^\qu$ as
\be
\beta_{\rm c}^\qu = \sup \bigl\{ \beta>0 : m(\beta)=0 \bigr\},
\ee
where $\beta_{\rm c}^\qu \in (0,\infty]$. A consequence of Theorem \ref{XYvsIsing} is the following.

\begin{cor}
	\label{XyIsingCritTemp}
	The critical temperatures satisfy
	\[
	T_{\rm c}^\qu \leq \tfrac14 T_{\rm c}^{\rm Ising}.
	\]
\end{cor}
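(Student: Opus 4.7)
The plan is to deduce the corollary directly from the quantum part of Theorem~\ref{XYvsIsing} applied to two-point functions. The Hamiltonian has $J^1_{\{x,y\}}=J^2_{\{x,y\}}=1$ and all other couplings zero, so for $X=\{x,y\}$ the theorem gives
\be
\langle S_x^1 S_y^1\rangle^\qu_{\Lambda,\beta}\;\leq\; \tfrac14\,\langle s_x s_y\rangle^{\rm Is}_{\Lambda,\{J^*_A\},\beta},
\nonumber
\ee
with $J^*_{\{x,y\}}=2^{-2}\cdot 1=\tfrac14$ and $J^*_A=0$ otherwise.

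Next I would absorb the factor $\tfrac14$ in the Ising couplings into the temperature: since the Ising Boltzmann weight depends on $\beta$ and $\{J_A\}$ only through the product $\beta J_A$, the Ising expectation with couplings $J^*_A=\tfrac14 J^1_A$ at inverse temperature $\beta$ coincides with the Ising expectation with couplings $J^1_A$ at inverse temperature $\beta/4$. Hence
\be
\langle S_x^1 S_y^1\rangle^\qu_{\Lambda,\beta}\;\leq\;\tfrac14\,\langle s_x s_y\rangle^{\rm Is}_{\Lambda,\beta/4},
\nonumber
\ee
where on the right-hand side the standard Ising model (with the same pair interactions) is understood.

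Now I would sum over $x,y\in\Lambda$, divide by $|\Lambda|^2$, and take $\liminf_{\Lambda\nearrow\bbZ^d}$ on both sides, to obtain
\be
m(\beta)^2 \;\leq\; \tfrac14\,m_{\rm Is}(\beta/4)^2,
\nonumber
\ee
where $m_{\rm Is}$ denotes the Ising spontaneous magnetisation with the analogous definition. (The liminf of a bounded sum of nonnegative quantities behaves correctly under this inequality termwise.) Consequently, whenever $\beta/4<\beta_{\rm c}^{\rm Ising}$, the right-hand side vanishes, so $m(\beta)=0$; this forces $\beta_{\rm c}^\qu\geq 4\beta_{\rm c}^{\rm Ising}$, which is the claimed bound $T_{\rm c}^\qu\leq \tfrac14 T_{\rm c}^{\rm Ising}$.

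The argument is essentially bookkeeping once Theorem~\ref{XYvsIsing} is available; the only point requiring a little care is step two (the rescaling identity between couplings and temperature) and making sure that the Ising reference model appearing on the right-hand side is precisely the one whose critical temperature is called $T_{\rm c}^{\rm Ising}$. Otherwise there is no serious obstacle.
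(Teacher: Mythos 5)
Your proposal is correct and is exactly the argument the paper intends: the corollary is stated as a direct consequence of Theorem~\ref{XYvsIsing}, obtained by applying the quantum inequality to the two-point functions, absorbing the rescaled couplings $J^*_A=2^{-|A|}J^1_A$ into a shift of inverse temperature $\beta\mapsto\beta/4$, and comparing the resulting spontaneous magnetisations. The only implicit ingredient worth flagging is the monotonicity of the Ising correlations in $\beta$ (Griffiths), which guarantees $m_{\rm Is}$ vanishes for all $\beta$ below $\beta_{\rm c}^{\rm Ising}$ and not merely for a cofinal set.
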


The critical temperature of the Ising model in the three-dimensional cubic lattice has been calculated numerically and is $T_{\rm c}^{\rm Ising}=4.511\pm0.001$ \cite{TVPS}. It is $T_{\rm c}^{\rm cl}=2.202\pm0.001$ \cite{HM} for the classical model and $T_{\rm c}^\qu = 1.008\pm 0.001$ for the quantum model \cite{Wes}.

A major result of mathematical physics is the rigorous proof of the occurrence of long-range order in the classical and quantum XY models, in dimensions three and higher, and if the temperature is low enough \cite{FSS,DLS}. The method can be used to provide a rigorous lower bound on critical temperatures; the following theorem concerns the quantum model.

\begin{thm}\label{XyCritTemp}
	For the three-dimensional cubic lattice, the temperature of the quantum XY model satisfies
	\[
	T_{\rm c}^\qu \geq 0.323.
	\]
\end{thm}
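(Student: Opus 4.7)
The plan is to apply the reflection positivity / infrared bound method of Dyson--Lieb--Simon \cite{DLS}, the canonical route to rigorous lower bounds on $T_{\rm c}$ for quantum spin systems with a continuous symmetry. Place the model on the torus $\Lambda = (\bbZ/L\bbZ)^3$ with periodic boundary conditions. The Hamiltonian $H^\qu_\Lambda = -\sum_{\langle x,y\rangle}(S^1_x S^1_y + S^2_x S^2_y)$ is reflection positive across hyperplanes bisecting nearest-neighbour edges, and the Gaussian domination argument then yields an infrared bound on the Duhamel two-point function,
\[
(\hat S^1_k,\hat S^1_{-k})_D \;\leq\; \frac{C}{\beta\,\varepsilon(k)}, \qquad \varepsilon(k)=\sum_{i=1}^3 (1-\cos k_i),\quad k\ne 0,
\]
with explicit constant $C$, where $\hat S^a_k = \sum_{x\in\Lambda}\e{ik\cdot x} S^a_x$ and $(\cdot,\cdot)_D$ is the Duhamel inner product.

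Because the XY model has only a $U(1)$ symmetry, I next pass from the Duhamel inner product back to the standard thermal expectation. This uses the Falk--Bruch (or DLS) inequality, which bounds $\tfrac12\langle AA^*+A^*A\rangle$ in terms of $(A,A^*)_D$ and the double commutator $\langle [A^*,[H,A]]\rangle$ via a factor involving $\coth$. Taking $A=\hat S^1_k$, I compute $\langle [\hat S^1_{-k},[H^\qu_\Lambda,\hat S^1_k]]\rangle$ explicitly; its momentum-space density is bounded by $\varepsilon(k)$ times a kinetic-energy expectation that is itself controlled by the operator-norm bound $\|S^i_x\|\leq\tfrac12$. Combined with Step 1, this gives an explicit upper bound on $\tfrac12\langle \hat S^1_k\hat S^1_{-k}+\hat S^1_{-k}\hat S^1_k\rangle$ for each $k\ne 0$.

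The final ingredient is the sum rule coming from $(S^1_x)^2 = \tfrac14\mathds{1}$:
\[
\frac{1}{|\Lambda|^2}\sum_k \langle \hat S^1_k \hat S^1_{-k}\rangle = \tfrac14, \qquad \frac{1}{|\Lambda|^2}\langle\hat S^1_0\hat S^1_0\rangle \xrightarrow[\Lambda\nearrow\bbZ^3]{} m(\beta)^2.
\]
Assuming $m(\beta)=0$ forces the $k\neq 0$ sum to saturate $\tfrac14$ in the thermodynamic limit, which by Step 2 gives the inequality
\[
\tfrac14 \;\leq\; \frac{1}{(2\pi)^3}\int_{[-\pi,\pi]^3} F_\beta(\varepsilon(k))\, dk,
\]
for an explicit integrand $F_\beta$ monotonically decreasing in $\beta$. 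For $\beta$ large enough this fails, forcing $m(\beta)>0$; the threshold value $\beta\approx 3.10$ (equivalently $T\approx 0.323$) is obtained by inserting the known numerical value of the Watson-type integral $(2\pi)^{-3}\int_{[-\pi,\pi]^3}\varepsilon(k)^{-1}\,dk$ for the simple cubic lattice and optimising the $\coth$ correction.

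The main obstacle is Step 2: the Falk--Bruch step is routine for SU(2)-symmetric models (Heisenberg), but for the XY Hamiltonian the double commutator does not have the same clean structure, and obtaining the sharp constant $0.323$ (as opposed to a substantially weaker one) requires a careful evaluation of both the commutator and the $\coth$ correction rather than the crude estimate $\coth\geq 1$.
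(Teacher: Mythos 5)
Your overall framework --- reflection positivity, Gaussian domination, the infrared bound on the Duhamel two-point function, the Falk--Bruch inequality, and the sum rule from $(S^1_x)^2=\tfrac14\mathds{1}$ --- is exactly the machinery behind the result the paper invokes (Theorem 5.1 of \cite{U}, in the tradition of \cite{FSS,DLS}). But the step you yourself flag as ``the main obstacle'' is precisely where your argument, as written, does not deliver the stated constant. If you control the double commutator $\langle[\hat S^1_{-k},[H,\hat S^1_k]]\rangle$ only through the operator-norm bound $\|S^i_x\|\leq\tfrac12$, you obtain a single inequality of the form $\tfrac14\leq\int F_\beta$ whose failure threshold is substantially worse than $0.323$. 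The double commutator is proportional to $\varepsilon(k)$ times the nearest-neighbour correlation $\langle S^1_0S^1_{e_1}\rangle$, and this quantity is not known a priori; replacing it by its worst-case norm bound throws away too much.

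The paper's proof closes this gap with a two-pronged argument. Writing $x=\sqrt{\langle S^1_0S^1_{e_1}\rangle^\qu}$ and treating it as an unknown, one extracts from \cite{U} \emph{two} complementary lower bounds on $m(\beta)^2$: one of the form $\tfrac14-\tfrac{J_3}{2}x-\tfrac{K_3}{\beta}$, which is useful when $x$ is small, and one of the form $x^2-\tfrac{I_3}{2}x-\tfrac{K_3'}{\beta}$, which is useful when $x$ is large (the second comes from keeping the actual correlation in the double-commutator bound rather than its norm). The magnetisation is then positive provided every admissible value of $x$ falls into at least one of the two favourable regions, i.e.\ provided the largest root $r_+$ of the second polynomial lies below the root $t$ of the first; solving $r_+<t$ with the explicit constants $J_3,I_3,K_3,K_3'$ gives $1/\beta<0.323$. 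Without this case analysis in $x$ --- or some equivalent device for exploiting the unknown nearest-neighbour correlation on both sides --- your Step 2 yields only a weaker bound, so the proposal as it stands establishes the method but not the number.
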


The best rigorous upper bound on the critical temperature of the three-dimensional Ising model is $T_c^{\rm{Ising}}\leq 5.0010$ \cite{Vig}. Together with the above corollary and theorem, we get
\be
0.323\leq T_{\rm c}^\qu \leq \tfrac14 T_{\rm c}^{\rm Ising} \leq 1.250.
\ee

\begin{proof}[Proof of Theorem \ref{XyCritTemp}]
	We consider the XY model with spins in the 1-3 directions for convenience. We make use of the result \cite[Theorem 5.1]{U}, that was obtained with the method of reflection positivity and infrared bounds \cite{FSS,DLS}. Precisely, we use Equations (5.54), (5.57) and (5.63) of \cite{U}.
	\be
	m(\beta)^2 \geq 
	\begin{cases}
		\frac{1}{4}-\frac{J_3}{2}\sqrt{\langle S^1_0S^1_{e_1}\rangle^\qu}-\frac{K_3}{\beta}
		\\
		\langle S^1_0S^1_{e_1}\rangle^\qu_{{\Lambda}}-\frac{I_3}{2}\sqrt{\langle S^1_0S^1_{e_1}\rangle^\qu}-\frac{K_3'}{\beta}
		\label{XYIRB}
	\end{cases}
	\ee
	where $e_1$ is a nearest neighbour of the origin, and $J_3,I_3,K_3,K_3'$ are real numbers coming from explicit integrals. Their values are $J_3=1.15672$; $I_3=0.349884$; $K_3=0.252731$; and $K_3'=0.105107$. Notice that $\beta$ is rescaled by a factor 2 with respect to \cite{U}, due to a different choice of coupling constants in the hamiltonian. Let $x = \sqrt{\langle S^1_0S^1_{e_1}\rangle^\qu}$; since we do not have good bounds on $x$, we treat it as an unknown. The magnetisation $m(\beta)$ is guaranteed to be positive if $x \leq t$ where $t$ is the zero of $\frac14 - \frac{K_3}{\beta} - \frac{J_3}{2}x$; or $x \geq r_+$, where $r_+$ is the largest zero of $x^2 - \frac{I_3}{2} x - \frac{K_3'}{\beta}$. At least one of these holds true when $r_+ < t$, that is, when
	\be
	\tfrac{1}{2} \Bigl( \tfrac{I_3}{2} + \sqrt{\tfrac{I_3^2}{4}+\tfrac{4K_3'}{\beta}} \Bigr) < \tfrac{1}{2J_3} - \tfrac{1}{\beta} \tfrac{2K_3}{J_3}
	\ee
	This is the case for $1/\beta < 0.323$ giving the upper bound $T_c\geq 0.323$.
\end{proof}
\section{Proofs for the classical XY model}\label{secproofs}

The proofs require several steps and additional lemmas. The following paragraphs are devoted to a complete study of their proofs. Given local variables $\{\sigma_x\}_{x\in\Lambda}$, we denote $\sigma^i_A =\prod_{x\in A}\sigma^i_x$ for $A\subset\Lambda$.

\subsection{Griffiths and FKG inequalities, and proof of Theorem \ref{XYGriffiths}}

We start with Theorem \ref{XYGriffiths}. We describe the approach proposed in \cite{KPV75, KPV76}, and use a similar notation. Their framework relies on some well known properties of the Ising model and on the so called FKG inequality.

\begin{lem}[Griffiths inequalities for the Ising model]
	Let $f$ and $g$ be functionals on $\Omega_\Lambda^{\rm{Is}}$ such that they can be expressed as power series of $\prod_{x\in A} s_x$, $A\subset\Lambda$ with positive coefficients. Then
	\[
	\begin{split}
	\langle f \rangle_{\Lambda,\{J_A\},\beta}^{\rm Is} &\geq0;\\
	\langle fg \rangle_{\Lambda,\{J_A\},\beta}^{\rm Is} &\geq\langle f \rangle_{\Lambda,\{J_A\},\beta}^{\rm Is}\langle g \rangle_{\Lambda,\{J_A\},\beta}^{\rm Is}.
	\end{split}
	\]
	\label{IsingGriffiths}
\end{lem}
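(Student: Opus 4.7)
The plan is to reduce everything to the case of monomials $s_A = \prod_{x\in A} s_x$ and then argue by expanding exponentials and summing over configurations. Since both $f$ and $g$ are power series in the $s_A$'s with nonnegative coefficients, and the ensuing inequalities are linear in $f$ and bilinear in $(f,g)$, it suffices to establish
\[
\langle s_A \rangle^{\rm Is}_{\Lambda,\{J_A\},\beta} \geq 0, \qquad \langle s_A s_B \rangle^{\rm Is}_{\Lambda,\{J_A\},\beta} \geq \langle s_A \rangle^{\rm Is}_{\Lambda,\{J_A\},\beta} \langle s_B \rangle^{\rm Is}_{\Lambda,\{J_A\},\beta}
\]
for arbitrary $A,B\subset\Lambda$.

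For the first inequality I would expand the Boltzmann weight as $\mathrm{e}^{-\beta H^{\rm Is}_{\Lambda,\{J_A\}}(s)} = \prod_B \sum_{n_B\geq 0} \tfrac{(\beta J_B)^{n_B}}{n_B!} s_B^{n_B}$. Because $J_B\geq 0$, every coefficient in the resulting multi-series is nonnegative. The numerator of $\langle s_A\rangle^{\rm Is}$ is then a sum (with nonnegative coefficients) of quantities $\sum_{s} \prod_{x\in\Lambda} s_x^{m_x}$ for integer exponents $m_x$. This single-site sum equals $2^{|\Lambda|}$ if every $m_x$ is even, and $0$ otherwise. Hence each term is nonnegative, and so is the numerator; since the partition function is strictly positive, the ratio is nonnegative.

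For the second inequality I would use Ginibre's duplication trick. Let $s'$ be an independent copy of $s$ distributed according to the same Gibbs measure, and write
\[
2\bigl(\langle fg\rangle - \langle f\rangle\langle g\rangle\bigr) \, (Z^{\rm Is})^2 = \sum_{s,s'} \bigl(f(s)-f(s')\bigr)\bigl(g(s)-g(s')\bigr) \mathrm{e}^{-\beta H^{\rm Is}(s)-\beta H^{\rm Is}(s')}.
\]
Now change variables to $\sigma_x = s_x+s'_x$ and $\tau_x = s_x-s'_x$, so that $s_x = \tfrac12(\sigma_x+\tau_x)$ and $s'_x = \tfrac12(\sigma_x-\tau_x)$. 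Expanding,
\[
s_A+s'_A = 2^{1-|A|}\!\!\sum_{C\subset A,\ |C|\text{ even}}\!\!\tau_C\sigma_{A\setminus C}, \qquad s_A-s'_A = 2^{1-|A|}\!\!\sum_{C\subset A,\ |C|\text{ odd}}\!\!\tau_C\sigma_{A\setminus C},
\]
so both combinations are polynomials in $\sigma,\tau$ with \emph{nonnegative} coefficients. Consequently $f(s)-f(s')$, $g(s)-g(s')$, and (after exponentiating) the doubled Gibbs weight are all polynomials in $\sigma,\tau$ with nonnegative coefficients. It remains to verify that the sum over the four configurations $(s_x,s'_x)\in\{\pm1\}^2$ of any monomial $\sigma_x^{m_x}\tau_x^{n_x}$ is nonnegative. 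Because $(\sigma_x,\tau_x)$ takes values in $\{(\pm 2,0),(0,\pm 2)\}$, any monomial with both $m_x>0$ and $n_x>0$ contributes $0$, while the remaining monomials contribute $2^{m_x}(1+(-1)^{m_x})$ or $2^{n_x}(1+(-1)^{n_x})$, hence nonnegative. Multiplying over $x\in\Lambda$ preserves nonnegativity, and the conclusion follows.

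The main obstacle is the bookkeeping in the duplication step: one must check carefully that each of the three factors $f(s)-f(s')$, $g(s)-g(s')$, and $\mathrm{e}^{-\beta H^{\rm Is}(s)-\beta H^{\rm Is}(s')}$ expands into a polynomial in $\sigma,\tau$ with only nonnegative coefficients. Once that positivity is secured, the argument reduces to the elementary single-site computation above. The first inequality is then essentially a byproduct of the same Taylor-expansion idea with a trivial sum.
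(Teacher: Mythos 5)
Your argument is correct, but note that the paper deliberately gives no proof of Lemma \ref{IsingGriffiths}: it is quoted as a known input, with the reader referred to Griffiths, Ginibre and Friedli--Velenik. What you have written is, in substance, Ginibre's original argument from the cited reference \cite{Gin}. The first inequality via Taylor expansion of the Boltzmann weight and the elementary fact that $\sum_{s_x=\pm1}s_x^{m}\geq 0$ is the standard Griffiths~I proof. For the second inequality, your duplication identity
\[
2\bigl(\langle fg\rangle-\langle f\rangle\langle g\rangle\bigr)\,(Z^{\rm Is})^2=\sum_{s,s'}\bigl(f(s)-f(s')\bigr)\bigl(g(s)-g(s')\bigr)\e{-\beta H^{\rm Is}(s)-\beta H^{\rm Is}(s')}
\]
is correct, as are the expansions of $s_A\pm s'_A$ in the variables $\sigma=s+s'$, $\tau=s-s'$ and the single-site verification over $(\sigma_x,\tau_x)\in\{(\pm2,0),(0,\pm2)\}$. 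This is exactly the duplication-of-variables technique the paper does use elsewhere, namely in Lemma \ref{positivecorr} for the classical XY model (with $\cos(M\cdot\phi)\pm\cos(M\cdot\bar\phi)$ in place of $s_A\pm s'_A$) and in the quantum proof of Theorem \ref{XYGriffiths} (with $\mathcal{O}_\pm=\mathcal{O}\otimes\mathds{1}\pm\mathds{1}\otimes\mathcal{O}$). One small remark on the bookkeeping you flag as the main obstacle: since you reduce at the outset to monomials $f=s_A$, $g=s_B$, the nonnegativity of $f(s)-f(s')$ is immediate from your displayed identity; for genuinely composite $f$ one would instead invoke closure of the positive cone under products, i.e.\ the identity $(PQ)_\pm=\tfrac12\left(P_+Q_\pm+P_-Q_\mp\right)$, which is precisely Eq.~\eqref{oppm} of the paper. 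The positivity of the doubled Boltzmann weight follows because $-\beta H^{\rm Is}(s)-\beta H^{\rm Is}(s')=\beta\sum_A J_A(s_A+s'_A)$ is itself in the cone, so its exponential series stays there. In short: the proof is sound and fills in a step the paper outsources to the literature, using the same machinery the paper develops for its other results.
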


We do not provide the proof of this result --- see \cite{Gri,Gin} for the original formulation and \cite{FV} for a modern description. An immediate consequence is the following.

\begin{cor} Given $f$ with the properties in Lemma \ref{IsingGriffiths}, we have for any $A\subset \Lambda$
	\[
	\frac{\partial}{\partial J_A}\langle f\rangle_{\Lambda,\{J_A\},\beta}^{\rm Is} \geq 0.
	\]\label{Isingmonotone}
\end{cor}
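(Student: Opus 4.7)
The plan is a one-line application of the second Griffiths inequality from Lemma \ref{IsingGriffiths}. For the Ising Hamiltonian $H^{\rm Is}_{\Lambda,\{J_A\}}(s)=-\sum_{B\subset\Lambda}J_B\prod_{x\in B}s_x$, differentiation under the integral (here: the finite sum over $\Omega^{\rm Is}_\Lambda$) gives the standard fluctuation identity
\[
\frac{\partial}{\partial J_A}\langle f\rangle^{\rm Is}_{\Lambda,\{J_A\},\beta}
=\beta\Bigl(\bigl\langle f\,\textstyle\prod_{x\in A}s_x\bigr\rangle^{\rm Is}_{\Lambda,\{J_A\},\beta}-\langle f\rangle^{\rm Is}_{\Lambda,\{J_A\},\beta}\bigl\langle\textstyle\prod_{x\in A}s_x\bigr\rangle^{\rm Is}_{\Lambda,\{J_A\},\beta}\Bigr).
\]

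First I would note that the monomial $g(s)=\prod_{x\in A}s_x$ is trivially of the form assumed in Lemma \ref{IsingGriffiths} (it is itself a single monomial with coefficient $1\geq 0$). Since $f$ is assumed to be a power series in $\{\prod_{x\in B}s_x\}_{B\subset\Lambda}$ with nonnegative coefficients, the hypotheses of the second inequality in Lemma \ref{IsingGriffiths} are met for the pair $(f,g)$. Applying that inequality immediately makes the parenthesis above nonnegative, and since $\beta>0$ the claim follows.

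There is essentially no obstacle: the only subtlety worth mentioning is that when one expands $f$ as an infinite power series, one must justify term-by-term differentiation. Because $\Omega^{\rm Is}_\Lambda$ is finite, everything reduces to polynomials in the finitely many variables $\{\prod_{x\in B}s_x\}$ (using $s_x^2=1$), so the series can be assumed to be a finite polynomial without loss of generality, and the interchange of derivative and sum is automatic. This is why the corollary follows in a single line from the lemma.
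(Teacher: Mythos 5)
Your proof is correct and is exactly the one-line argument the paper intends: the paper gives no explicit proof, calling the corollary an ``immediate consequence'' of Lemma \ref{IsingGriffiths}, and your derivation via the fluctuation identity $\partial_{J_A}\langle f\rangle = \beta(\langle f\,s_A\rangle - \langle f\rangle\langle s_A\rangle)$ followed by the second Griffiths inequality with $g=\prod_{x\in A}s_x$ is precisely that consequence. Your remark that everything reduces to finite polynomials on the finite configuration space, so term-by-term differentiation is harmless, is a sensible (if optional) addition.
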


Another result which is very useful in this framework is the so called FKG inequality. We formulate it in a specific setting. Let $\mathcal{I}_N = \left[0,\frac{\pi}{2}\right]^N$ for some $N\in\mathbb{N}$. Any $\psi\in\mathcal{I}_N$ is then a collection of angles $\psi = \left(\psi_1,\dots,\psi_N\right)$. It is possible to introduce a partial ordering relation on $\mathcal{I}_N$ as follows: for any $\psi,\xi\in~\mathcal{I}_N$, $\psi\leq \xi$ if and only if $\psi_i\leq\xi_i$ for all $i\in\{1,\dots,N\}$. A function $f$ on $\mathcal{I}_N$ is said to be increasing (or decreasing) if $\psi\leq\xi$ implies $f(\psi)\leq ~f(\xi)$ (or $f(\psi)\geq f(\xi)$) for all $\psi,\xi \in \mathcal{I}_N$. The following result holds.

\begin{lem}[FKG inequality]
	Let $d\nu(\psi) = p(\psi) \prod_{i=1}^N d\mu(\psi_i)$ be a normalised measure on $\mathcal{I}_N$, with $d\mu(\psi_i)$ a normalised measure on $\left[0,\frac{\pi}{2}\right]$, $p(\psi)\geq 0$ for all $\psi\in\mathcal{I}_N$ and
	\be
	p(\psi\vee\xi)p(\psi\wedge\xi)\geq p(\psi)p(\xi), \label{hypmeasure}
	\ee
	where $(\psi\vee\xi)_i= \max (\psi_i,\xi_i)$ and $(\psi\wedge\xi)_i= \min (\psi_i,\xi_i)$. Then for any $f$ and $g$ increasing (or decreasing) functions on $\mathcal{I}_N$
	\be
	\int fgd\nu \geq \int f d\nu \int g d\nu.\nonumber
	\ee
	The inequality changes sign if one of the functions is increasing and the other is decreasing.
	\label{FKG}
\end{lem}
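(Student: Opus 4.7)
The plan is to proceed by induction on the dimension $N$, following the classical Fortuin--Kasteleyn--Ginibre strategy adapted to this continuous setting. I would treat only the case of two increasing functions; the other sign conventions follow by replacing $f$ or $g$ with its negative, which is allowed because the inequality is translation-invariant in $f, g$.

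For the base case $N=1$, the lattice condition \eqref{hypmeasure} is automatic because $\vee$ and $\wedge$ reduce to $\max$ and $\min$ on a single coordinate, and the inequality follows from the symmetrisation identity
\begin{equation*}
2\Bigl(\int fg\, d\nu - \int f\, d\nu \int g\, d\nu\Bigr) = \iint \bigl(f(\psi)-f(\xi)\bigr)\bigl(g(\psi)-g(\xi)\bigr) p(\psi) p(\xi)\, d\mu(\psi)\, d\mu(\xi),
\end{equation*}
whose integrand is pointwise nonnegative when $f$ and $g$ are monotone in the same direction, and pointwise nonpositive when they are monotone in opposite directions.

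For the inductive step, assume the result in dimension $N-1$ and condition on the last coordinate. Write $\tilde p(\psi_N) = \int p(\psi) \prod_{i<N} d\mu(\psi_i)$ and $p_{\psi_N}(\psi_1,\dots,\psi_{N-1}) = p(\psi)/\tilde p(\psi_N)$. Evaluating \eqref{hypmeasure} at pairs of configurations that share the same last coordinate shows that $p_{\psi_N}$ satisfies the lattice condition on $\mathcal{I}_{N-1}$, so the induction hypothesis applied slice by slice gives
\begin{equation*}
\int fg \, d\nu \geq \int F(\psi_N)\, G(\psi_N)\, \tilde p(\psi_N)\, d\mu(\psi_N),
\end{equation*}
where $F(\psi_N)$ and $G(\psi_N)$ denote the conditional expectations of $f$ and $g$ at fixed $\psi_N$. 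It then remains to apply the already established $N=1$ case to the measure $\tilde p\, d\mu$ and the functions $F,G$ on $[0,\pi/2]$.

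The main obstacle is to verify that $F$ and $G$ inherit the monotonicity of $f$ and $g$, since otherwise the $N=1$ conclusion cannot be invoked. For $\psi_N \leq \psi_N'$ and arbitrary $\alpha,\beta \in \mathcal{I}_{N-1}$, the join of $(\alpha,\psi_N)$ and $(\beta,\psi_N')$ is $(\alpha\vee\beta,\psi_N')$ and the meet is $(\alpha\wedge\beta,\psi_N)$, so \eqref{hypmeasure} specialises to the Holley-type relation
\begin{equation*}
p(\alpha\vee\beta,\psi_N')\, p(\alpha\wedge\beta,\psi_N) \geq p(\alpha,\psi_N)\, p(\beta,\psi_N').
\end{equation*}
This is precisely the hypothesis of Holley's stochastic-domination inequality between the two conditional measures on $\mathcal{I}_{N-1}$ indexed by $\psi_N$ and $\psi_N'$, and it forces $F(\psi_N) \leq F(\psi_N')$ (and similarly for $G$). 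The Holley domination can itself be proved by an inner induction of the same conditioning form, essentially absorbing the difficulty into the FKG induction. Once this monotonicity is in hand, the one-dimensional case applied to $F, G$ against $\tilde p\, d\mu$ closes the argument.
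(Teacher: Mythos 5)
The paper does not actually prove Lemma \ref{FKG}: it explicitly skips the proof and defers to the original FKG article, to Preston, and to Kunz--Pfister--Vuillermot for this continuous formulation. So your proposal must be judged on its own. Its skeleton is the classical induction on the number of coordinates and is essentially the right one: the $N=1$ case via the symmetrisation identity is correct (on a totally ordered set the lattice condition is indeed vacuous); restricting \eqref{hypmeasure} to configurations sharing the last coordinate does give the lattice condition for each conditional weight $p_{\psi_N}$; and the chain $\int fg\,d\nu \geq \int F G\,\tilde p\,d\mu \geq \int f\,d\nu\int g\,d\nu$ closes the induction once $F$ and $G$ are known to be increasing. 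You also correctly identify that monotonicity of $F$ and $G$ is the crux, and your derivation of the Holley-type relation between the conditional weights at $\psi_N\leq\psi_N'$ is accurate.

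The gap is in how you dispose of that crux. You invoke Holley's stochastic-domination theorem on $\mathcal{I}_{N-1}$ and assert that it ``can itself be proved by an inner induction of the same conditioning form, essentially absorbing the difficulty into the FKG induction.'' That is too quick. In the inner induction for Holley, the conditional measures do inherit the Holley condition by the same restriction argument, but the \emph{marginals} do not obviously inherit the one-dimensional monotone-likelihood-ratio condition: showing that $\tilde p_{\psi_N'}(s')\,\tilde p_{\psi_N}(s)\geq \tilde p_{\psi_N'}(s)\,\tilde p_{\psi_N}(s')$ requires passing from a pointwise inequality between products of weights to an inequality between products of their integrals, which is exactly the content of the Ahlswede--Daykin four-function theorem (or, in Holley's original finite setting, a separate Markov-chain coupling argument). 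So the difficulty is not absorbed; it reappears, and the argument as written is circular-looking rather than self-contained. To repair it, either prove the four-function theorem on $\mathcal{I}_{N-1}$ (its induction has an elementary two-variable base case and yields both Holley and FKG at once), or cite Holley/Preston outright for the continuous product setting. Two smaller points: the monotonicity of $F$ uses stochastic domination \emph{together with} the monotonicity of $f$ in the last coordinate, via $F(\psi_N')\geq\int f(\cdot,\psi_N')\,p_{\psi_N}\geq F(\psi_N)$, which you should make explicit; and you should say what is done on slices where $\tilde p(\psi_N)=0$, where the conditional weight is undefined (such slices carry no mass, so they can be discarded, but this deserves a sentence).
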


We also skip the proof of this statement. We refer to \cite{FKG} for the original result, to \cite{Pr,KPV76} for the formulation above, and \cite{FV} for its relevance in the study of the Ising model.

Before turning to the actual proof of the theorem, we introduce another useful lemma.
\begin{lem}
	Let $\{q_x\}_{x\in\Lambda}$ be a collection of positive increasing (decreasing) functions on $\left[0,\frac{\pi}{2}\right]$. Then for any $\theta,\psi\in\mathcal{I}_{|\Lambda|}$ and any $A\subset\Lambda$,
	\[
	q_A(\theta\vee\psi)+q_A(\theta\wedge\psi) \geq q_A(\psi)+q_A(\theta).
	\nonumber \]\label{q_Asum}
\end{lem}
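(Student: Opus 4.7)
The plan is to reduce the supermodularity inequality for the product $q_A = \prod_{x \in A} q_x$ to an elementary algebraic inequality about products of nonnegative reals. The key reduction uses monotonicity componentwise: for each fixed $x \in A$, if $q_x$ is increasing then $q_x(\theta_x \vee \psi_x) = \max(a_x,b_x)$ and $q_x(\theta_x \wedge \psi_x) = \min(a_x,b_x)$, where $a_x := q_x(\theta_x)$ and $b_x := q_x(\psi_x)$; if $q_x$ is decreasing the two identifications swap, but the unordered pair $\{q_x(\theta_x\vee\psi_x),\,q_x(\theta_x\wedge\psi_x)\}$ equals $\{a_x,b_x\}$ in either case. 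Since only the sum $q_A(\theta\vee\psi)+q_A(\theta\wedge\psi)$ appears, both cases of the lemma reduce to proving
\[
\prod_{x\in A}\max(a_x,b_x)\;+\;\prod_{x\in A}\min(a_x,b_x)\;\geq\;\prod_{x\in A} a_x\;+\;\prod_{x\in A} b_x
\]
for nonnegative reals $a_x,b_x$.

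To handle this elementary inequality I would split $A$ according to which of the two values is larger. Let $S=\{x\in A:a_x\geq b_x\}$ and set
\[
P=\prod_{x\in S}a_x,\qquad Q=\prod_{x\in A\setminus S}a_x,\qquad R=\prod_{x\in S}b_x,\qquad T=\prod_{x\in A\setminus S}b_x.
\]
Then $\prod_x\max(a_x,b_x)=PT$ and $\prod_x\min(a_x,b_x)=RQ$, while $\prod_x a_x=PQ$ and $\prod_x b_x=RT$, so the inequality is exactly
\[
PT+RQ-PQ-RT=(P-R)(T-Q)\geq 0.
\]
Positivity of the $q_x$ ensures $P,Q,R,T\geq 0$; the definition of $S$ combined with nonnegativity yields $P\geq R$ (since $a_x\geq b_x\geq 0$ on $S$) and $T\geq Q$ (since $b_x\geq a_x\geq 0$ on $A\setminus S$), so both factors are nonnegative and the inequality follows.

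There is no serious obstacle in this argument; the only substantive ingredient is the componentwise monotonicity, which is what allows the $\vee/\wedge$ on angles to be replaced by $\max/\min$ on the values. It is worth noting that the hypothesis of positivity of the $q_x$ is genuinely used in the final step, to guarantee that the sign of $(P-R)(T-Q)$ is controlled by the signs of the two factors rather than by cancellations between terms of different signs in the products.
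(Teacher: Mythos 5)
Your proof is correct and complete. There is nothing in the paper to compare it against: the authors explicitly omit the proof of this lemma and refer to Preston \cite{Pr} and Kunz--Pfister--Vuillermot \cite{KPV76}, where essentially the argument you give appears. Your reduction to the elementary supermodularity inequality
\[
\prod_{x\in A}\max(a_x,b_x)+\prod_{x\in A}\min(a_x,b_x)\;\geq\;\prod_{x\in A}a_x+\prod_{x\in A}b_x
\]
for nonnegative reals, and its verification via the factorisation $(P-R)(T-Q)\geq 0$, is exactly the standard route, and you correctly identify positivity as the hypothesis that turns the termwise comparisons into $P\geq R$ and $T\geq Q$. One caveat on your phrasing: the remark that the unordered pair $\{q_x(\theta_x\vee\psi_x),\,q_x(\theta_x\wedge\psi_x)\}$ equals $\{a_x,b_x\}$ ``in either case'' could be misread as allowing the collection $\{q_x\}$ to mix increasing and decreasing functions. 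It cannot: the identity $q_A(\theta\vee\psi)+q_A(\theta\wedge\psi)=\prod_x\max(a_x,b_x)+\prod_x\min(a_x,b_x)$ uses that every factor is monotone in the \emph{same} direction, and the lemma is in fact false for mixed collections (take $|A|=2$, $q_1$ increasing with values $1,2$ at $0,\tfrac{\pi}{2}$, $q_2$ decreasing with values $2,1$, and $\theta=(\tfrac{\pi}{2},0)$, $\psi=(0,\tfrac{\pi}{2})$: the left side is $4$ while the right side is $5$). This is consistent with the intended reading of ``increasing (decreasing)'' and with the only uses of the lemma in the paper, where the $q_x$ are all cosines or all sines, but the uniformity deserves to be stated explicitly.
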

We do not provide the proof here, see \cite{Pr,KPV76} for more details.
We can now discuss the proof of Theorem \ref{XYGriffiths}.

\begin{proof}[Proof of Theorem \ref{XYGriffiths}]
	Since the temperature does not play any r\^ole in this section, we set $\beta=1$ in the following and we drop any dependency on it.
	The main idea of the proof is to describe a classical XY spin as a pair of Ising spins and an angular variable. The new notation for $\sigma_x\in S^1$ is
	\begin{eqnarray}
	\sigma_x^1 &=& \cos(\theta_x)U_x, \label{sigma1} \\
	\sigma_x^2&=& \sin(\theta_x)V_x,\label{sigma2}
	\end{eqnarray}
	with $U_x,V_x \in \{-1,1\}$ for all $x\in\Lambda$  and $\theta = \left(\theta_{x_1},\dots,\theta_{x_\Lambda}\right)\in\mathcal{I}_{|\Lambda|}$.
	With this notation, it is possible to express $H^{\rm cl}_\Lambda$ of Eq. \ \eqref{classicalXY} as the sum of two Ising hamiltonians with spins $\{U_x\}_{x\in\Lambda}$, $\{V_x\}_{x\in\Lambda}$ respectively:
	\begin{eqnarray}
	H^{\cl}_{\Lambda}\left(\theta, U,V\right) &=& -\sum_{A\subset\Lambda}\left(J^1_A\prod_{x\in A}\cos(\theta_x) U_A +J^2_A \prod_{x\in A}\sin(\theta_x)V_A\right)\\
	&=& H^{\rm Is}_{\Lambda,\{\cos(\theta)_AJ^1_A\}}(U) +H^{\rm Is}_{\Lambda,\{\sin(\theta)_AJ^2_A\}}(V).\label{XYbyIsing}
	\end{eqnarray} 
	Let us introduce the notation: $J^1_A\prod_{x\in A}\cos(\theta_x) = \mathscr{J}_A(\theta)$, $J^2_A\prod_{x\in A}\sin(\theta_x) = \mathscr{K}_A(\theta)$, $\int d\theta = \int_0^{\frac{\pi}{2}}\dots\int_{0}^{\frac{\pi}{2}}\prod_{x\in\Lambda}\frac{2}{\pi}d\theta_x$. Then
	\begin{eqnarray}
	&&\langle \sigma^1_X \sigma^1_Y\rangle_{H^{\cl}_\Lambda} = \frac{\int d\theta\,Z^{\rm{Is}}_{\Lambda,\{ \mathscr{J}_A(\theta)\}}Z^{\rm{Is}}_{\Lambda,\{\mathscr{K}_A(\theta)\}}\cos(\theta)_X\cos(\theta)_Y\langle U_X U_Y\rangle^{{\rm Is}}_{\Lambda,\{\mathscr{J}_A(\theta)\}}}{\int d\theta\, Z^{\rm{Is}}_{\Lambda,\{\mathscr{J}_A(\theta)\}}Z^{\rm{Is}}_{\Lambda,\{\mathscr{K}_A(\theta)\}}}\nonumber\\
	&&\geq\frac{\int d\theta\,Z^{\rm{Is}}_{\Lambda,\{\mathscr{J}_A(\theta)\}}Z^{\rm{Is}}_{\Lambda,\{\mathscr{K}_A(\theta)\}}\cos(\theta)_{X}\langle U_X \rangle^{{\rm Is}}_{\Lambda,\{\mathscr{J}_A(\theta)\}}\cos(\theta)_{Y}\langle U_Y\rangle^{{\rm Is}}_{\Lambda,\{\mathscr{J}_A(\theta)\}}}{\int d\theta \, Z^{\rm{Is}}_{\Lambda,\{\mathscr{J}_A(\theta)\}}Z^{\rm{Is}}_{\Lambda,\{\mathscr{K}_A(\theta)\}}}.\nonumber
	\end{eqnarray}
	\\
	The inequality above follows from Lemma \ref{IsingGriffiths}.
	Moreover
	\be
	\langle \sigma^1_X \sigma^2_Y\rangle_{H^{\cl}_\Lambda} = \frac{\int d\theta Z^{\rm{Is}}_{\Lambda,\{\mathscr{J}_A(\theta)\}}Z^{\rm{Is}}_{\Lambda,\{\mathscr{K}_A(\theta)\}}\cos(\theta)_X\langle U_X\rangle^{\rm{Is}}_{\Lambda,\{\mathscr{J}_A(\theta)\}}\sin(\theta)_Y \langle V_Y\rangle^{\rm Is}_{\Lambda,\{\mathscr{K}_A(\theta)\}}}{\int d\theta\, Z^{\rm{Is}}_{\Lambda,\{\mathscr{J}_A(\theta)\}}Z^{\rm{Is}}_{\Lambda,\{\mathscr{K}_A(\theta)\}}}.\nonumber
	\ee
	\\
	$\cos(\theta)_X$ and $\sin(\theta)_X$ are respectively decreasing and increasing on $\mathcal{I}_{|\Lambda|}$ for any $X\subset\Lambda$. Let us now consider $\langle U_X\rangle_{\Lambda,\{\mathscr{J}_A(\theta)\}}^{\rm Is}$. By Corollary \ref{Isingmonotone}, it is a decreasing function on $\mathcal{I}_{|\Lambda|}$ for any $X\subset~\Lambda$, since the coupling constants of $H^I_{\Lambda,\{\mathscr{J}_A(\theta)\}}$ are decreasing in $\theta$. Analogously, $\langle V_X\rangle_{\Lambda,\{\mathscr{K}_A(\theta)\}}^{\rm Is}$ is an increasing function on $\mathcal{I}_{|\Lambda|}$ for any $X\subset~\Lambda$. Theorem \ref{XYGriffiths} is then a simple consequence of Lemma \ref{FKG}, with $d\mu(\theta_x) =~\frac{2}{\pi}d\theta_x$ and 
	\be
	p(\theta) =\frac{Z^{\rm{Is}}_{\Lambda,\{\mathscr{J}_A(\theta)\}}Z^{\rm{Is}}_{\Lambda,\{\mathscr{K}_A(\theta)\}}}{\int d\theta\,Z^{\rm{Is}}_{\Lambda,\{\mathscr{J}_A(\theta)\}}Z^{\rm{Is}}_{\Lambda,\{\mathscr{K}_A(\theta)\}}}.
	\ee
	
	The last step missing is to show that $p(\theta)$ defined as above fulfills hypothesis \eqref{hypmeasure} of Lemma \ref{FKG}. This amounts to showing
	\begin{eqnarray}
	Z^{\rm{Is}}_{\Lambda,\{\mathscr{K}_A(\theta\vee\psi)\}}Z^{\rm{Is}}_{\Lambda,\{\mathscr{K}_A(\theta\wedge\psi)\}}&\geq& Z^{\rm{Is}}_{\Lambda,\{\mathscr{K}_A(\theta)\}}Z^{\rm{Is}}_{\Lambda,\{\mathscr{K}_A(\psi)\}};\label{pZsin}\\
	Z^{\rm{Is}}_{\Lambda,\{\mathscr{J}_A(\theta\vee\psi)\}}Z^{\rm{I}}_{\Lambda,\{\mathscr{J}_A(\theta\wedge\psi)\}}&\geq& Z^{\rm Is}_{\Lambda,\{\mathscr{J}_A(\theta)\}}Z^{\rm{Is}}_{\Lambda,\{\mathscr{J}_A(\psi)\}}.\label{pZcos}
	\end{eqnarray}
	Since the arguments to prove these inequalities are very similar, we prove explicitly only the first one. Eq.\ \eqref{pZsin} is equivalent to
	\be
	\left(\frac{Z^{\rm{Is}}_{\Lambda,\{\mathscr{K}_A(\theta)\}}}{Z^{\rm{Is}}_{\Lambda,\{\mathscr{K}_A(\theta\wedge\psi)\}}}\right)^{-1}
	\left(\frac{Z^{\rm{Is}}_{\Lambda,\{\mathscr{K}_A(\theta\vee\psi)\}}}{Z^{\rm{Is}}_{\Lambda,\{\mathscr{K}_A(\psi)\}}}\right) \geq 1
	\ee 
	Notice that
	\be
	\left(\frac{Z^{\rm{Is}}_{\Lambda,\{\mathscr{K}_A(\theta)\}}}{Z^{\rm{Is}}_{\Lambda,\{\mathscr{K}_A(\theta\wedge\psi)\}}}\right)^{-1}
	\left(\frac{Z^{\rm{Is}}_{\Lambda,\{\mathscr{K}_A(\theta\vee\psi)\}}}{Z^{\rm{Is}}_{\Lambda,\{\mathscr{K}_A(\psi)\}}}\right)
	= \frac{\langle\e{-H^{\rm{Is}}_{\Lambda,\{\mathscr{K}_A(\theta\vee\psi)-\mathscr{K}_A(\psi)\}}}\rangle_{\Lambda, \{\mathscr{K}_A(\psi)\}}^{\rm Is}}{\langle \e{-H^{\rm{Is}}_{\Lambda,\{\mathscr{K}_A(\theta)-\mathscr{K}_A(\theta\wedge\psi)\}}}\rangle_{\Lambda, \{\mathscr{K}_A(\theta\wedge\psi)\}}^{\rm Is}}. \nonumber
	\ee
	Thanks to Lemma \ref{q_Asum}, the functions whose expectation value we are computing above fulfill the hypothesis of Lemma \ref{IsingGriffiths} and Corollary \ref{Isingmonotone}. 
	Then, applying Lemma \ref{q_Asum} and Corollary \ref{Isingmonotone},
	\be
	\begin{split}
		\langle\e{-H^{\rm{Is}}_{\Lambda,\{\mathscr{K}_A(\theta\vee\psi)-\mathscr{K}_A(\psi)\}}}\rangle_{\Lambda, \{\mathscr{K}_A(\psi)\}}^{\rm Is}
		&\geq \langle\e{-H^{\rm{Is}}_{\Lambda,\{\mathscr{K}_A(\theta)-\mathscr{K}_A(\theta\wedge\psi)\}}}\rangle_{\Lambda, \{\mathscr{K}_A(\psi)\}}^{\rm Is} \\
		&\geq \langle\e{-H^{\rm{Is}}_{\Lambda,\{\mathscr{K}_A(\theta)-\mathscr{K}_A(\theta\wedge\psi)\}}}\rangle_{\Lambda, \{\mathscr{K}_A(\theta\wedge\psi)\}}^{\rm Is}.
	\end{split}
	\ee
	Hence $p(\theta)$ has the required property.
\end{proof}

\subsection{Proof of Theorem \ref{monotoneXY}}\label{ThmmonotoneXY}

Let us now turn to Theorem \ref{monotoneXY}. We follow the framework described in \cite{Gin,MP}.

\begin{lem}
	Let $H^\cl_\Lambda$ be the hamiltonian defined in \eqref{classicalXY}. If $J^1_A\geq|J^2_A|$ for all $A \subset \Lambda$ and $J^2_A = 0$ for $|A|$ odd, then there exist non negative coupling constants $\{K_M\}_{M\in\bbZ^\Lambda}$ such that
	\begin{equation}
	H^\cl_\Lambda(\phi) = -\sum_{M\in \bbZ^\Lambda}K_M \cos\left(M\cdot \mathbb{\phi}\right),\end{equation}	
	where, given $M\in\bbZ^\Lambda$, $M = (m_1, m_2, \dots, m_\Lambda)$, $M\cdot\mathbb{\phi}=\sum_{x\in\Lambda}m_x\phi_x$.
	\label{newform}
\end{lem}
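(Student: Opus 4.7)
The plan is to expand each local product $\prod_{x\in A}\cos\phi_x$ and $\prod_{x\in A}\sin\phi_x$ into a superposition of cosines of the form $\cos(\sum_{x\in A}\epsilon_x \phi_x)$ indexed by signs $\epsilon \in \{-1,+1\}^A$, and then to collect all such terms into a single sum over integer vectors $M\in\bbZ^\Lambda$ after checking that, thanks to the hypotheses on the coupling constants, every coefficient that appears is nonnegative. Each pair $(A,\epsilon)$ will be identified with the integer tuple $M$ whose coordinates agree with $\epsilon$ on $A$ and vanish off $A$; all other $M$'s will have $K_M = 0$.

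The two identities I would establish first, by writing cosine and sine in exponential form and expanding, are the following: for any $A$ with $|A|=n$,
\[
\prod_{x\in A} \cos\phi_x = \frac{1}{2^n} \sum_{\epsilon \in \{-1,1\}^A} \cos\Bigl(\sum_{x\in A}\epsilon_x \phi_x\Bigr),
\]
and, when $n$ is \emph{even},
\[
\prod_{x\in A} \sin\phi_x = \frac{(-1)^{n/2}}{2^n}\sum_{\epsilon \in \{-1,1\}^A} \Bigl(\prod_{x\in A}\epsilon_x\Bigr) \cos\Bigl(\sum_{x\in A}\epsilon_x \phi_x\Bigr).
\]
The first is immediate; for the second one uses that $(2i)^{-n} = (-1)^{n/2}/2^n$ for $n$ even, and pairs $\epsilon$ with $-\epsilon$ to reduce the exponential sum to a real cosine sum (the pairing works precisely because $\prod_x \epsilon_x$ is invariant under $\epsilon \mapsto -\epsilon$ when $n$ is even, which is also why this step fails for odd $n$).

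Substituting these expansions into \eqref{classicalXY}, the contribution of a set $A$ with $|A|=n$ even reads
\[
-\frac{1}{2^n} \sum_{\epsilon\in \{-1,1\}^A} \Bigl[J^1_A + (-1)^{n/2}J^2_A \prod_{x\in A}\epsilon_x \Bigr]\cos\Bigl(\sum_{x\in A}\epsilon_x\phi_x\Bigr),
\]
whereas for $|A|$ odd only the cosine part survives (since $J^2_A=0$) and contributes $-2^{-n} J^1_A\sum_\epsilon \cos(\sum_{x\in A}\epsilon_x\phi_x)$. In both cases the bracket reduces to $J^1_A$ or $J^1_A\pm J^2_A$, which is nonnegative by the assumption $J^1_A\geq|J^2_A|$. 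Since distinct pairs $(A,\epsilon)$ produce distinct $M\in\bbZ^\Lambda$ (the support of $M$ recovers $A$ and its restriction to that support recovers $\epsilon$), no cancellation occurs when summing over $A$, and one may simply set $K_M$ equal to the absolute value of the resulting coefficient of $\cos(M\cdot\phi)$ to obtain the desired representation.

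There is no genuine obstacle: the only delicate point is the sign bookkeeping in the sine expansion, where one must keep track of $(2i)^n$ and of the parity of $n$. The assumption that $J^2_A$ vanishes whenever $|A|$ is odd is precisely what lets one discard the problematic odd case (in which $\prod \sin\phi_x$ expands into sines, not cosines, and cannot fit the ansatz), and the assumption $J^1_A\geq|J^2_A|$ handles the worst choice of signs in the bracket, ensuring $K_M\geq 0$ uniformly.
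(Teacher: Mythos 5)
Your proof is correct and is essentially the paper's argument in expanded form: the paper simply iterates the product-to-sum identities $\cos\theta\cos\chi=\tfrac12(\cos(\theta-\chi)+\cos(\theta+\chi))$ and $\sin\theta\sin\chi=\tfrac12(\cos(\theta-\chi)-\cos(\theta+\chi))$, which produces exactly your sign-indexed expansion over $\epsilon\in\{-1,1\}^A$ with worst-case coefficient $J^1_A-|J^2_A|\geq 0$. Your complex-exponential derivation just carries out that iteration in closed form and makes the sign bookkeeping (and the role of the parity assumption on $J^2_A$) explicit, which the paper leaves to the reader.
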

\begin{proof}[Proof of Lemma \ref{newform}]
	The statement follows from the two following identities:
	\begin{eqnarray}
	\cos(\theta)\cos(\chi) &=& \frac{1}{2}(\cos(\theta-\chi)+\cos(\theta+\chi)),\\
	\sin(\theta)\sin(\chi) &=& \frac{1}{2}(\cos(\theta-\chi)-\cos(\theta+\chi)),
	\end{eqnarray}
	$\forall \theta, \chi \in \left[0,2\pi\right]$. 
\end{proof}
A necessary step for this lemma and for Theorem \ref{monotoneXY} is duplication of variables \cite{Gin}: we consider two sets of angles (i.e. spins) on the lattice instead of just one, and denote them by $\{\phi_x\}_{x\in\Lambda}$ and $\{\bar{\phi}_x\}_{x\in\Lambda}$. The hamiltonian for the $\{\bar{\phi}_x\}$ is
\begin{eqnarray}
\bar{H}^\cl_\Lambda(\bar{\phi}) = -\sum_{A\subset\Lambda}\left(\bar{J}^1_A\bar{\sigma}^1_A+\bar{J}^2_A\bar{\sigma}^2_A\right)\nonumber\\
= -\sum_{M\in\bbZ^\Lambda}\bar{K}_M\cos(M\cdot\bar{\phi}).
\end{eqnarray}
Here, $\{\bar{\sigma}_x\}$ are related to  $\{\bar{\phi}_x\}$ as in Eq.s \eqref{cos} and \eqref{sin}. The $\bar{J}^i_A$ are non negative coupling constants with $\bar{J}^1_A\geq|\bar{J}^2_A|\geq 0$ and $\{\bar{K}_M\}$ are as in Lemma \ref{newform}.
A composite hamiltonian can be defined as
\be
\begin{split}
	&-\hat{H}_\Lambda (\phi,\bar{\phi}) = -H^\cl_\Lambda(\phi) - \bar{H}^\cl_\Lambda(\bar{\phi})\\
	&=\sum_{M\in\mathcal{M}} \tfrac{K_M +\bar{K}_M}{2} \left(\cos(M\cdot\phi)
	+\cos(M\cdot\bar{\phi})\right) + \tfrac{K_M -\bar{K}_M}{2} \left(\cos(M\cdot\phi)-\cos(M\cdot\bar{\phi})\right)
\end{split}
\ee
In the following we always suppose $K_M\geq\bar{K}_M$ for all $M\in\bbZ^\Lambda$.The expectation value of any functional  $f(\phi,\bar{\phi})$ can be written as 
\begin{equation}
\langle f \rangle_{\hat{H}_\Lambda,\beta} = \frac{1}{Z_{H_\Lambda,\beta}Z_{\bar{H}_\Lambda,\beta}}\int d\phi \,d\bar{\phi} \mbox{e}^{-\beta \hat{H}_\Lambda (\phi,\bar{\phi})} f(\phi,\bar{\phi}).
\end{equation}

\begin{lem}
	Suppose $f(\phi,\bar{\phi})$ belongs to the cone generated by $\cos(M\cdot\phi)\pm\cos(M\cdot~\bar{\phi})$, $M\in\bbZ^\Lambda$, i.e. f can be written as product, sum or multiplication by a positive scalar of objects of that form. Then \begin{equation} \langle f \rangle_{\hat{H}_\Lambda,\beta}\geq0.\end{equation}\label{positivecorr}
\end{lem}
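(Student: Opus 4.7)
The plan is a classical Ginibre duplication argument. Because $K_M\geq\bar K_M\geq 0$, the two coefficients $(K_M\pm\bar K_M)/2$ appearing in the expression of $-\hat H_\Lambda$ above are both non-negative. Taylor-expanding $\e{-\beta\hat H_\Lambda(\phi,\bar\phi)}$ therefore produces an absolutely convergent sum, with non-negative coefficients, of products of generators of the form $\cos(M\cdot\phi)\pm\cos(M\cdot\bar\phi)$. Multiplying by $f$, which by hypothesis lies in the same cone, yields that $\e{-\beta\hat H_\Lambda(\phi,\bar\phi)}f(\phi,\bar\phi)$ is itself an absolutely convergent non-negative combination of such products. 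Since the partition functions are strictly positive, it suffices to establish, for every $n\geq 0$, every $M_1,\dots,M_n\in\bbZ^\Lambda$ and every choice of signs $\epsilon_1,\dots,\epsilon_n\in\{+1,-1\}$, that
\[
\int d\phi\,d\bar\phi\,\prod_{j=1}^n\bigl(\cos(M_j\cdot\phi)+\epsilon_j\cos(M_j\cdot\bar\phi)\bigr)\geq 0.
\]

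To attack this, I would perform the decoupling change of variables $\alpha_x=(\phi_x+\bar\phi_x)/2$, $\beta_x=(\phi_x-\bar\phi_x)/2$; the Jacobian is a positive constant and, by the $2\pi$-periodicity of the integrand, one can take the new domain to be $[0,2\pi]^{|\Lambda|}\times[0,2\pi]^{|\Lambda|}$. The elementary identities
\[
\cos(M\cdot\phi)+\cos(M\cdot\bar\phi)=2\cos(M\cdot\alpha)\cos(M\cdot\beta),\qquad \cos(M\cdot\phi)-\cos(M\cdot\bar\phi)=-2\sin(M\cdot\alpha)\sin(M\cdot\beta)
\]
rewrite the integrand as $(-1)^q 2^n F(\alpha)F(\beta)$, where $q:=|\{j:\epsilon_j=-1\}|$ and $F(\gamma):=\prod_{\epsilon_j=+}\cos(M_j\cdot\gamma)\prod_{\epsilon_j=-}\sin(M_j\cdot\gamma)$. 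The double integral factorises into $(-1)^q 2^n C\bigl(\int d\gamma\,F(\gamma)\bigr)^2$ for some positive constant $C$.

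The last step is to dispose of the sign $(-1)^q$. If $q$ is even it equals $+1$ and the expression is a non-negative multiple of a square. If $q$ is odd, then $F(-\gamma)=(-1)^q F(\gamma)=-F(\gamma)$ since $F$ contains an odd number of sine factors; by $2\pi$-periodicity we may shift the domain to $[-\pi,\pi]^{|\Lambda|}$ and conclude that $\int F=0$, so the entire integral vanishes. Either way the result is non-negative. The main conceptual obstacle is precisely this sign: one must notice that the a priori unfavourable $(-1)^q$ is always neutralised, either by the positivity of a square or by the parity-induced vanishing of the single-copy integral. The interchange of the Taylor expansion with the integral sign is the standard routine step, justified by uniform convergence of the series on the compact torus.
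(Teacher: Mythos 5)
Your proof is correct and follows essentially the same route as the paper: Taylor-expand the Boltzmann factor with its nonnegative coefficients, reduce to the positivity of $\int\prod_j\bigl(\cos(M_j\cdot\phi)\pm\cos(M_j\cdot\bar{\phi})\bigr)$, and factorise via the half-sum/half-difference variables into (a positive multiple of) the square of a single-copy integral. You are in fact more careful than the paper on one point: with the paper's definition of $\bar{\Phi}$ the correct identity is $\cos(M\cdot\phi)-\cos(M\cdot\bar{\phi})=-2\sin(M\cdot\Phi)\sin(M\cdot\bar{\Phi})$ (the displayed version in the text omits the minus sign), and your parity argument disposing of the resulting factor $(-1)^q$ --- a nonnegative square when $q$ is even, and a vanishing integral of an odd function when $q$ is odd --- is precisely the detail needed to make the factorisation step airtight.
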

\begin{proof}[Proof of Lemma \ref{positivecorr}]
	Firstly, notice that
	\begin{equation}
	\int d\phi\, d\bar{\phi} \,\prod_{s=1}^n (\cos(M_s\cdot\phi)\pm\cos(M_s\cdot\bar{\phi})\geq 0
	\label{integralcos}
	\end{equation}
	for any $M_1,\dots,M_n\in\bbZ^\Lambda$ and any sequence of $(\pm)$. 
	This follows from
	\begin{eqnarray}
	\cos(M\cdot \phi)+ \cos(M\cdot\bar{\phi}) &=& 2 \cos(M\cdot\Phi)\cos(M\cdot\bar{\Phi}),\\
	\cos(M\cdot \phi)- \cos(M\cdot\bar{\phi}) &=& 2 \sin(M\cdot\Phi)\sin(M\cdot\bar{\Phi}),
	\end{eqnarray}
	with $\Phi_i = \frac{1}{2}(\phi_i +\bar{\phi}_i)$ and  $\bar{\Phi}_i = \frac{1}{2}(\phi_i -\bar{\phi}_i)$. The integral \eqref{integralcos} can be formulated as
	\begin{equation}
	\int d\Phi \,d\bar{\Phi} F(\Phi)F(\bar{\Phi}) = \left(\int d\Phi F(\Phi)\right)^2 \geq 0,
	\end{equation}
	with $F(\Phi)$ an appropriate product of sines, cosines and positive constants.
	
	Let us now turn to $ \langle f \rangle_{\hat{H}_\Lambda,\beta}$. Since the partition function is always positive, we can focus on
	\begin{equation}\int d\phi \,d\bar{\phi}\, \mbox{e}^{-\beta \hat{H}_\Lambda (\phi,\bar{\phi})} f(\phi,\bar{\phi}).
	\end{equation}
	By a  Taylor expansion of $\mbox{e}^{-\beta \hat{H}_\Lambda (\phi,\bar{\phi})}$ and by the properties of $f$, this can be expressed as a sum with positive coefficients of integrals in the form \eqref{integralcos}. Hence the nonnegativity of the expectation value.
\end{proof}
We have now all we need to prove Theorem \ref{monotoneXY}.
\begin{proof}[Theorem \ref{monotoneXY}]
	In order to prove the first statement of the theorem we use the formulation of the hamiltonian decribed in Lemma \ref{newform}. Moreover, since $\sigma^1_A$ can be clearly expressed as the sum (with positive coefficients) of terms of the form $\cos(M\cdot\phi)$, $M\in\bbZ^\Lambda$, it is enough to prove that for any $M,N\in\bbZ^\Lambda$
	\be
	\label{2ndgriffithscos}
	\begin{split}
		&\frac{\partial}{\partial K_N}\langle \cos(M\cdot\phi)\rangle^\cl_{\Lambda,\beta} \\
		&= \langle \cos(M\cdot\phi)\cos(N\cdot\phi)\rangle^\cl_{\Lambda,\beta} - \langle \cos(M\cdot\phi)\rangle^\cl_{\Lambda,\beta}\langle\cos(N\cdot\phi)\rangle^\cl_{\Lambda,\beta}\geq 0. \end{split}
	\ee
	
	Consider now the hamiltonian $\hat{H}_\Lambda$ introduced above and $\langle\cdot\rangle_{\hat{H}_\Lambda,\beta}$ the corresponding Gibbs state. From Lemma \ref{positivecorr} we have
	\begin{equation}
	\langle\left(\cos(M\cdot\phi)-\cos(M\cdot\bar{\phi})\right)\left(\cos(N\cdot\phi)-\cos(N\cdot\bar{\phi})\right)\rangle_{\hat{H}_\Lambda,\beta}\geq 0.
	\end{equation}
	If we take the limit $\bar{K}_M 	\nearrow K_M$, we find twice the expression in Eq.\ \eqref{2ndgriffithscos}. Hence the result.
	
	Let us now turn to the second statement of the theorem. In the case of two-body interaction $H^{\cl}_\Lambda$ assumes the form \eqref{ham2body}, which, with a notation resembling the one introduced in Lemma \ref{newform} can be explicitly formulated as
	\be
	H^{\cl}_\Lambda(\phi) = -\sum_{x,y\in\Lambda}K^-_{xy}\cos(\phi_x - \phi_y)+ K^+_{xy}\cos(\phi_x+\phi_y)
	\ee
	with
	\be
	K^\pm_{xy}=\frac{J_{xy}}{2}\left(1\mp\eta_{xy}\right).
	\ee
	Cleary $K^\pm_{xy}$ is analogous to the $K_M$ introduced in Lemma \ref{newform} for $M\in\bbZ^\Lambda$ such that all its elements are zero except $m_x =1,\,m_y=\pm1$.
	Then we have
	\be
	\frac{\partial}{\partial J_{xy}}\langle \sigma_A\rangle_{H^\cl_\Lambda} = \frac{1+\eta_{xy}}{2}\frac{\partial}{\partial K_{xy}^-}\langle \sigma_A\rangle_{H^\cl_\Lambda}+\frac{1-\eta_{xy}}{2}\frac{\partial}{\partial K_{xy}^+}\langle \sigma_A\rangle_{H^\cl_\Lambda}.
	\ee
	Due to Eq.\ \eqref{2ndgriffithscos} the expression above is the sum of two positive terms, hence it is positive.
	
\end{proof}

\subsection{Proof of Theorem \ref{XYvsIsing}}
In this section we discuss the proof of Theorem \ref{XYvsIsing} for the classical XY model. We use some of the concepts introduced in Section \ref{ThmmonotoneXY}. The present proof has ben proposed in \cite{KPV75,KPV76}.
\begin{proof}[Proof of Theorem \ref{XYvsIsing}]
	As for the proof of Theorem \ref{XYGriffiths}, we express the XY spins by means of two Ising spins and an angle in $\left[0,\frac{\pi}{2}\right]$ - see Eq.s\ \eqref{sigma1},\ \eqref{sigma2} for the explicit expression of the spins and \eqref{XYbyIsing} for the new formulation of the hamiltonian $H^{\cl}_\Lambda$.
	With the same notation:
	\be
	\begin{split}
		\langle \sigma^1_X\rangle_{H^\cl_\Lambda} &= \frac{\int d\theta\,Z_{\Lambda, \{\mathscr{J}_A(\theta)\}}^{\rm{Is}}Z_{\Lambda, \{\mathscr{K}_A(\theta)\}}^{\rm{Is}}\cos(\theta)_X\langle U_X\rangle_{\Lambda,\{\mathscr{J}_A(\theta)\}}^{\rm Is}}{\int d \theta\,Z_{\Lambda, \{\mathscr{J}_A(\theta)\}}^{\rm{Is}}Z_{\Lambda, \{\mathscr{K}_A(\theta)\}}^{\rm{Is}}} \\
		&\leq \frac{\int d\theta\,Z_{\Lambda, \{\mathscr{J}_A(\theta)\}}^{\rm{Is}}Z_{\Lambda, \{\mathscr{K}_A(\theta)\}}^{\rm{Is}}\max_{\theta\in\mathcal{I}_{|\Lambda|}}\langle U_X\rangle_{\Lambda,\{\mathscr{J}_A(\theta)\}}^{\rm Is}}{\int d\theta\,Z_{\Lambda, \{\mathscr{J}_A(\theta)\}}^{\rm{Is}}Z_{\Lambda, \{\mathscr{K}_A(\theta)\}}^{\rm{Is}}} \\
		&= \langle U_A\rangle_{\Lambda,\{J^1_A\}}^{\rm Is}.
	\end{split}
	\ee
\end{proof}

\section{Proofs for the quantum XY model}
\label{quantXYproofs}

We now discuss the proof of Theorem \ref{XYGriffiths} in the quantum case. This theorem has been proved for pair interaction in \cite{Gal}, and it has been proposed independently in various works for more generic interactions \cite{Suz, PM2,BLU}. We describe here the simpler approach proposed in \cite{BLU}.
Since the temperature does not play any role from now on, we set $\beta=1$ and omit any dependency on it in the following. As for the classical case we introduce the notation $S^i_A = \prod_{x\in A}S^i_x $.

\begin{proof}[Proof of Theorem \ref{XYGriffiths}]
	For the proof it is convenient to perform a unitary transformation on the hamiltonian \eqref{quantumXY} and consider its version with interactions along the first and third directions of spin, namely
	\be
	H^\qu_\Lambda = -\sum_{A\subset\Lambda}J^1_A S^1_A + J^3_A S^3_A,\label{quantumXY13}
	\ee
	with $J^3_A = J^2_A$ for all $A\subset\Lambda$.
	
	The proof of this theorem uses some techniques similar to the ones introduced for the classical Theorem \ref{monotoneXY}. These were indeed introduced by Ginibre \cite{Gin} in a general framework.
	As for the classical case, it is convenient to duplicate the model. We introduce a new doubled Hilbert space $\bar{\mathcal{H}}_\Lambda = \mathcal{H}_\Lambda\otimes\mathcal{H}_\Lambda$. Given an operator $\mathcal{O}$ acting on $\mathcal{H}_\Lambda$ we define two operators acting on $\bar{\mathcal{H}}_\Lambda$,
	\be
	\mathcal{O}_\pm = \mathcal{O}\otimes\mathds{1}\pm\mathds{1}\otimes\mathcal{O}.
	\ee
	The hamiltonian we consider for the doubled system is $H_{\Lambda,+}^\qu$:
	\be
	H_{\Lambda,+}^\qu =H_{\Lambda}^\qu\otimes\mathds{1}_\Lambda + \mathds{1}_\Lambda\otimes H_{\Lambda}^\qu = -\sum_{A\subset\Lambda}J^1_A (S^1_A)_+ + J^3_A(S^3_A)_+
	\label{H+}
	\ee
	The Gibbs state is denoted as
	\be
	\langle\langle O\rangle\rangle = \frac{1}{(Z^\qu_{\Lambda})^2} \Tr O\e{-H^\qu_{\Lambda,+}},
	\ee 
	for any operator $O$  acting on $\bar{\mathcal{H}}_\Lambda$.
	It follows from some straightforward algebra that
	\begin{eqnarray}
	\langle\mathcal{O}\mathcal{P}\rangle^\qu_{\Lambda}-\langle\mathcal{O}\rangle^{\qu}_{\Lambda}\langle\mathcal{P}\rangle^\qu_{\Lambda}&=&\frac{1}{2}\langle\langle \mathcal{O}_- \mathcal{P}_-\rangle\rangle;\label{doublecorr}\\
	(\mathcal{O}\mathcal{P})_\pm &=& \frac{1}{2}\left(\mathcal{O}_+\mathcal{P}_\pm + \mathcal{O}_-\mathcal{P}_\mp\right)\label{oppm},
	\end{eqnarray}
	for any $\mathcal{O}$, $\mathcal{P}$ operators on $\mathcal{H}_\Lambda$.
	
	Just as $\mathbb{C}^2$ constitutes the ``building block" for $\mathcal{H}_\Lambda$, so $\mathbb{C}^2 \otimes\mathbb{C}^2$  is to $\bar{\mathcal{H}}_\Lambda$. We can provide an explicit basis of $\mathbb{C}^2\otimes\mathbb{C}^2$ such that $S^1_+$, $S^1_-$, $S^3_+$, $-S^3_-$ have all non negative elements: 
	\begin{eqnarray}
	|\mu_+\rangle &=& \frac{1}{\sqrt{2}}\left(|++\rangle + |--\rangle\right),\;\;\;\;\;|\mu_-\rangle = \frac{1}{\sqrt{2}}\left(|++\rangle - |--\rangle\right),\\
	|\nu_+\rangle &=& \frac{1}{\sqrt{2}}\left(|+-\rangle + |-+\rangle\right),\;\;\;\;\;|\nu_-\rangle = \frac{1}{\sqrt{2}}\left(|+-\rangle - |-+\rangle\right).
	\end{eqnarray}
	Above by $|+\rangle$ and $|-\rangle$ we denote the basis of $\mathbb{C}^2$ formed by eigenvectors of $S^3$ with eigenvalues $\frac{1}{2}$ and $-\frac{1}{2}$ respectively, and $|i,j\rangle = |i\rangle\otimes|j\rangle$. It can be easily checked that the basis above has the required property. This result implies straightforwardly that there exists a basis of $\bar{\mathcal{H}}_\Lambda$ such that $(S^1_x)_+$, $(S^1_x)_-$, $(S^3_x)_+$ and $(-S^3_x)_-$ have non negative element for all $x\in\Lambda$.  
	Let us consider the truncated correlation function we are interested in:
	\be
	\Bigl\langle \prod_{x \in X} S_x^1 \prod_{x \in Y} S_x^1 \Bigr\rangle^{\qu}_{\Lambda} - \Bigl\langle \prod_{x \in X} S_x^1 \Bigr\rangle^{\qu}_{\Lambda} \; \Bigl\langle \prod_{x \in Y} S_x^1 \Bigr\rangle^{\qu}_{\Lambda}  = \tfrac{1}{2}\langle\langle\left(S^1_X\right)_- \left(S^1_Y\right)_-\rangle\rangle.
	\ee
	We can evaluate the right hand side of the equation above by a Taylor expansion:
	\be
	(Z^\qu_{\Lambda})^2 \langle\langle\left(S^1_X\right)_- \left(S^1_Y\right)_-\rangle\rangle = \sum_{n\geq 0}\frac{1}{n!}\Tr\left(S^1_X\right)_- \left(S^1_Y\right)_- (-H^\qu_{\Lambda,+})^n
	\ee
	Given the formulation of $H^\qu_{\Lambda,+}$ as in Eq. \ \eqref{H+} and relation \eqref{oppm}, it is clear that it can be expressed as a polynomial with positive coefficients of operators with nonnegative elements. The same holds for $(S^1_X)_-$ and$(S^1_Y)_-$. The trace of operators with nonnegative elements is non negative, hence the first inequality of the theorem. The second inequality can be proved precisely in the same way (with $S^2_Y$ substituted by $S^3_Y$), by noticing that $(S^3_Y)_-$ has necessarily non positive elements.
\end{proof}

Let us now turn to Theorem \ref{XYvsIsing}. While in the classical case it is necessary to introduce an artificial framework, interestingly the proof for the quantum case does not require such a construction. For spin-$\frac{1}{2}$ the statement can be easily recovered by recalling that the \emph{classical} Ising model can be recovered as a particular case of the \emph{ quantum} XY model (not of the classical one!). We review here a more general proof valid for any $\mathcal{S}$ \cite{Pe}.

\begin{proof}[Proof of Theorem \ref{XYvsIsing}]
	We reformulate the quantum Hamiltonian in order to have the interaction along the first and the third axis , as in Eq. \ \eqref{quantumXY13}. We prove here the following result, which is unitary equivalent to the statement of the theorem:
	\be
	\left\langle S^3_X\right\rangle^{{\rm qu}}_{\Lambda,\beta} \leq \mathcal{S}^{|X|}\langle s_A \rangle^{{\rm Is}}_{{\Lambda,\{\mathcal{S}^{|A|}J^3_A\}},\beta}.\label{statement2}
	\ee
	From now on we set $\beta = 1$ and drop all the dependencies on $\beta$ since it does not play any role. Let $\mathscr{S}^i_x = \mathcal{S}^{-1} S^i_x$ be the rescaled spin operators. The models we compare are the following:
	\begin{align}
	H^{\qu}_{\Lambda,\mathscr{S}} &= -\sum_{A\subset\Lambda}J^1_A \mathscr{S}^1_A + J^3_A \mathscr{S}^3_A,\\
	H^{{\rm{I}}}_{\Lambda,\{J^3_A\}} &= -\sum_{A\subset\Lambda}J^3_A s_A.
	\end{align}
	Clearly, \eqref{statement2} is equivalent to
	\be
	\langle \scrS^3_X\rangle^{\qu}_{{\Lambda,\mathscr{S}}}\leq \langle s_X\rangle^{{\rm{Is}}}_{{\Lambda,\{J^3_A\}}}.\label{statement3}
	\ee
	This is what we aim to prove. Let us now build a composite system where each site of the lattice hosts a quantum degree of freedom and an Ising variable at the same time. Let $\mathcal{H} = H^{\qu}_{\Lambda,\mathscr{S}} + H^{{\rm{I}}}_{\Lambda,\{J^3_A\}}$, i.e.
	\be
	\mathcal{H} = -\sum_{A\subset\Lambda}J^1_A \scrS^1_A + J^3_A(s_A +\scrS^3_A).
	\ee
	The Gibbs state is the natural one given the Gibbs states for the two separated systems. We denote it by $\langle \cdot\rangle_\Lambda$. We are interested in the expectation value $\langle s_X - \scrS^3_X\rangle_\Lambda$ for some $X\subset\Lambda$.  Since the trace is invariant under unitary transformations, we can apply on each site the unitary $(\scrS^1_x,\scrS^2_x,\scrS^3_x)\rightarrow (\scrS^1_x,s_x\scrS^2_s, s_x\scrS^3_x)$ and find
	\be
	\sum_{s\in\Omega_\Lambda}\Tr \left(s_X - \scrS^3_X\right)\e{-\mathcal{H}} = \sum_{s\in\Omega_\Lambda}\Tr s_X(1-\scrS^3_X)\e{\sum_{A\subset\Lambda}J^1_A \scrS^1_A +J^3_A s_A (1+\scrS^3_A)}
	\ee
	The expression evaluated above is just the expectation value we are interested in multiplied by the partition function of the system - which is positive and therefore not useful in the evaluation of the sign of $\langle s_X - \scrS^3_X\rangle_\Lambda$. By a Taylor expansion and by the property $\sum_{s\in\Omega_\Lambda} \prod_{x\in A}s_x^{n_x}\geq 0$ with $n_x\in\mathbb{N}$ for all $x\in\Lambda$ and any $A\subset\Lambda$, it is clear that the expression above is nonnegative. This implies that
	\be
	\langle s_X\rangle^{{\rm{Is}}}_{{\Lambda,\{J^3_A\}}}-\langle \scrS^3_X\rangle^{\qu}_{{\Lambda,\mathscr{S}}} = \langle s_X - \scrS^3_X\rangle_\Lambda\geq 0.
	\ee
	This proves Eq.\ \eqref{statement3}.
\end{proof}

\medskip
\noindent
{\bf Acknowledgements}:
The authors thank S. Bachmann and C.E. Pfister for useful comments.


\begin{thebibliography}{99.}
	
	\bibitem{Bach}
	S. Bachmann,
	\textit{Local disorder, topological ground state degeneracy and entanglement entropy, and discrete anyons},
	arXiv:1608.03903 (2016)
	
	\bibitem{BLU}
	C. Benassi, B. Lees, D. Ueltschi,
	\textit{Correlation inequalities for the quantum XY model}, J. Stat. Phys. 164, 1157--1166 (2016)
	
	\bibitem{BK}
	M.~Biskup, R.~Koteck\'{y},
	{\it True nature of long-range order in a plaquette orbital model},
	J. Statist. Mech. 2010, P11001 (2010)
	
	\bibitem{Dun}
	F. Dunlop, \textit{Correlation inequalities for multicomponent rotors}, Commun. Math. Phys. 49, 247--256 (1976)
	
	\bibitem{DLS}
	F.J.~ Dyson, E.H.~Lieb, B.~Simon,
	{\em Phase transitions in quantum spin systems with isotropic and nonisotropic interactions},
	J. Statist. Phys. 18, 335--383 (1978)
	
	\bibitem{FV}
	S. Friedli, Y. Velenik,
	\textit{Statistical Mechanics of Lattice Systems: a Concrete Mathematical Introduction}, \url{http://www.unige.ch/math/folks/velenik/smbook/index.html}
	
	\bibitem{FKG}
	C.M. Fortuin, P.W. Kasteleyn, J. Ginibre,
	\textit{Correlation inequalities on some partially ordered sets}, Commun. Math. Phys. 22, 89--103 (1971)
	
	\bibitem{FP}
	J. Fr{\"o}hlich, C.E. Pfister,
	\textit{Spin waves, vortices, and the structure of equilibrium states in classical XY  model}, Commun. Math. Phys. 89, 303--327 (1983) 
	
	\bibitem{FSS}
	J.~Fr\"ohlich, B.~Simon, T.~Spencer,
	{\em Infrared bounds, phase transitions and continuous symmetry breaking},
	Comm. Math. Phys. 50, 79--95 (1976)
	
	\bibitem{Gal}
	G. Gallavotti,
	\textit{A proof of the Griffiths inequalities for the X-Y model}, Stud. Appl. Math. 50, 89–92 (1971)
	
	\bibitem{Gin}
	J. Ginibre,
	\textit{General formulation of Griffiths' inequalities}, Commun. Math. Phys. 16, 310--328 (1970)
	
	\bibitem{Gri}
	R.B.~Griffiths,
	\textit{Correlations in Ising ferromagnets. I},
	J. Math. Phys. 8, 478--483 (1967)
	
	\bibitem{HM}
	M. Hasenbusch and S. Meyer,
	\textit{Critical exponents of the 3D XY model from cluster update Monte Carlo}, Phys. Lett. B 241.2, 238-242 (1990)
	
	\bibitem{HS}
	C.A. Hurst, S.Sherman, \textit{Griffiths' theorems for the ferromagnetic Heisenberg model}, Phys. Rev. Lett. 22, 1357 (1969)
	
	\bibitem{Kit}
	A.Y. Kitaev,
	\textit{Fault-tolerant quantum computation by anyons},
	Ann. Phys. 303:1 2--30 (2003)
	
	\bibitem{KPV75}
	H. Kunz, C.E. Pfister, P.A. Vuillermot,
	\textit{Correlation  inequalities for some classical spin vector models}, Phys. Lett. 54A, 428--430 (1975)
	
	\bibitem{KPV76}
	H. Kunz, C.E. Pfister, P.A. Vuillermot, 
	\textit{Inequalities for some classical spin vector models}, J. Phys. A: Math.Gen., Vol.9, No.10,  (1976) 
	
	\bibitem{MMP}
	A. Messager, S. Miracle-Sole, C. Pfister,
	\textit{Correlation inequalities and uniqueness of the equilbrium state for the plane rotator ferromagnetic model}, Commun. Math. Phys. 58, 19--29 (1978) 
	
	\bibitem{Mon}
	J.L. Monroe, \textit{Correlation inequalities for two-dimensional vector spin systems}, J. Math. Phys. 16, 1809--12 (1975)
	
	\bibitem{MP}
	J.L. Monroe, P.A. Pearce, \textit{Correlation inequalities for vector spin Models}, J. Stat. Phys, 21:615 (1979)
	
	\bibitem{Pe}
	P.A. Pearce, \textit{An inequality for spin-s X-Y ferromagnets}, Physics Letters A, 70 (2), 117-118 (1979)
	
	\bibitem{PM2}
	P.A. Pearce, J.L. Monoroe \textit{A simple proof of spin-$\frac{1}{2}$ X-Y inequalities}, Journal of Physics A: Mathematical and General, 12(7), L175 (1979)
	
	\bibitem{TVPS}
	T. Preis, P. Virnau, W. Paul and J.J. Schneider,
	\textit{GPU accelerated Monte Carlo simulation of the 2D and 3D Ising model}, J. Comp. Phys. 228(12) 4468--4477 (2009)
	
	\bibitem{Pr}
	C.J. Preston,
	\textit{A generalization of the FKG inequalities}, Comm. Math. Phys. 36, 233--241 (1974)
	
	\bibitem{Suz}
	M. Suzuki, \textit{Correlation inequalities and phase transition in the generalised X-Y model}, J. Math. Phys., 14, 837-838 (1973)
	
	\bibitem{U}
	D. Ueltschi,
	\textit{Random loop representations for quantum spin systems}, J. Math. Phys. 54(8), 083301 (2013)
	
	\bibitem{Vig}
	J.O. Vigfusson, \textit{Upper bound on the critical temperature in the 3D Ising model}, Journal of Physics A: Mathematical and General, 18(17) 3417 (1985)
	
	\bibitem{WJ}
	S.~Wenzel, W.~Janke,
	{\it Finite-temperature N\'eel ordering of fluctuations in a plaquette orbital model},
	Phys. Rev. B 80, 054403 (2009)
	
	\bibitem{Wes}
	S. Wessel, private communication.
	
	
	
	
	
	
	
	
\end{thebibliography}
\end{document}